\documentclass{llncs}

\usepackage{fancyhdr}
\usepackage{amssymb,amsmath}
\usepackage[usenames,dvipsnames]{color}
\usepackage[colorlinks=true,linkcolor=blue,citecolor=Mahogany]{hyperref}

\usepackage[comma,authoryear,square,sort]{natbib} 
\usepackage{multirow}
\usepackage{mathtools}

\newcommand{\eps}{\varepsilon}
\renewcommand{\epsilon}{\eps}

\newcommand{\ignore}[1]{}

\usepackage{graphicx}

\usepackage{xspace}

\newcommand{\NPCshort}{\ensuremath{\mathsf{NPC}}\xspace}

\newcommand{\Pshort}{\ensuremath{\mathsf{P}}\xspace}
\newcommand{\NPshort}{\ensuremath{\mathsf{NP}}\xspace}

\newenvironment{proofsketch}{\noindent {\em Proof sketch:}}{ \hfill $\square$\\ }

\sloppy

\begin{document}

\title{Detecting Possible Manipulators in Elections}

\author{Palash Dey, Neeldhara Misra, and Y. Narahari\\ \texttt{palash@csa.iisc.ernet.in, mail@neeldhara.com, hari@csa.iisc.ernet.in}}

\institute{Department of Computer Science and Automation \\Indian Institute of Science - Bangalore, India.\\[10pt]Date: \today}

\maketitle
\pagestyle{fancy}


\begin{abstract}

Manipulation is a  problem of fundamental importance 
in the context of voting in which the voters exercise their 
votes strategically instead of voting honestly to prevent
selection of an alternative that is less preferred. 
The Gibbard-Satterthwaite theorem 
shows that there is no strategy-proof voting rule that simultaneously satisfies 
certain combinations of desirable properties. 
Researchers have attempted to get around
the impossibility results in several ways such as
domain restriction and  computational hardness of manipulation. 
However these approaches have been shown to have limitations. 
Since prevention of manipulation seems to be elusive, an interesting research direction 
therefore is  detection of manipulation. Motivated by this, 
we initiate the study of detection of possible manipulators 
in an election.  

We formulate two pertinent computational 
problems - Coalitional Possible Manipulators (CPM) and 
Coalitional Possible Manipulators given Winner (CPMW), where a 
suspect group of voters is provided as input to compute whether 
they can be a potential coalition of possible manipulators. 
In the absence of any suspect group, we formulate two more computational 
problems namely Coalitional Possible Manipulators Search (CPMS), and Coalitional Possible Manipulators Search 
given Winner (CPMSW). We provide polynomial time algorithms
for these problems, for several popular voting rules. For a few other voting rules,
we show that these problems are in \NPshort{}-complete. We observe that 
detecting manipulation maybe easy even when manipulation is hard, as seen
for example, in the case of the Borda voting rule.
\end{abstract}

\keywords{Computational social choice, voting, manipulation, detection}

\newpage

\section{Introduction}

On many occasions, agents need to agree upon a common decision although 
they have different preferences over the available alternatives. A natural 
approach used in these situations is voting. Some classic examples 
of the use of voting rules in the context of multiagent systems are in 
collaborative filtering~\citep{pennock2000social}, rank aggregation in web~\citep{dwork2001rank} etc.

In a typical voting scenario, we have a set of $m$ candidates and a set of $n$ voters 
reporting their rankings of the candidates called their preferences or votes. A voting rule 
selects one candidate as the winner once all the voters provide their votes. A set of votes over a set of 
candidates along with a voting rule is called an election. 
A basic problem with voting rules is that the voters 
may vote strategically instead of voting honestly, leading to the 
selection of a candidate which is not the {\em actual winner}. 
We call a candidate actual winner if, it wins the election when every voter votes truthfully. 
This phenomenon of strategic voting is called 
{\em manipulation\/} in the context of voting. The  
Gibbard-Satterthwaite (G-S) theorem \citep{gibbard1973manipulation,satterthwaite1975strategy} 
says that manipulation is unavoidable for any \textit{unanimous}
and \textit{non-dictatorial} voting rule if we have at least three candidates. 
A voting rule is called \textit{unanimous} if 
whenever any candidate is most preferred by all the voters, such a candidate is the winner. 
A voting rule is called \textit{non-dictatorial} 
if there does not exist any voter whose most preferred candidate is always 
the winner irrespective of the votes of other voters.
The problem of manipulation is particularly relevant for multiagent systems
since agents have computational power to determine strategic votes. 
There have been several attempts to bypass the impossibility 
result of the G-S theorem.

Economists have proposed domain restriction as a way out of the impossibility 
implications of the G-S theorem. The G-S theorem assumes all possible 
preference profiles as the domain of voting rules. In a restricted domain, it has been
shown that we can have voting rules that are not vulnerable to manipulation. 
A prominent restricted domain is the domain of single peaked preferences, in 
which the median voting rule provides a satisfactory 
solution~\citep{mas1995microeconomic}. To know more about other domain restrictions, 
we refer to \citep{mas1995microeconomic,gaertner2001domain}. This approach of
restricting the domain, however, 
suffers from the requirement that the social planner needs to know the 
domain of preference profiles of the voters, which is often impractical. 

\subsection{Related Work}

Researchers in computational social choice theory have proposed invoking computational intractability of manipulation as a possible 
work around for the G-S theorem. 
Bartholdi et al.~\citep{bartholdi1991single,bartholdi1989computational} 
first proposed the idea of using computational hardness as a barrier 
against manipulation. Bartholdi et al. defined and studied the computational problem called manipulation 
where a set of manipulators have to compute their votes that make their preferred candidate win the election.
The manipulators know the votes of the truthful voters and the voting rule that will be used to compute the winner. 
Following this, a large body of research  
\citep{narodytska2011manipulation,davies2011complexity,xia2009complexity,xia2010scheduling,conitzer2007elections,obraztsova2011ties,elkind2005hybrid,faliszewski2013weighted,narodytska2013manipulating,gaspers2013coalitional,obraztsova2012optimal,faliszewski2010using,zuckerman2011algorithm,dey2014asymptotic,faliszewski2014complexity,elkind2012manipulation} 
shows that the manipulation problem is in \NPshort{}-complete (\NPCshort{}) for many voting rules. However, 
Procaccia et al.~\citep{procaccia2006junta,procaccia2007average} 
showed average case easiness of manipulation assuming \textit{junta} 
distribution over the voting profiles.  
Friedgut et al.~\citep{friedgut2008elections} showed that any \textit{neutral} voting rule which is sufficiently far from 
being dictatorial is manipulable with non-negligible probability at any uniformly random preference profile 
by a uniformly random preference. The above result holds for elections 
with three candidates only. 
A voting rule is called \textit{neutral} 
if the names of the candidates are immaterial. 
Isaksson et al.~\citep{isaksson2012geometry} 
generalize the above result to any number of candidates. 
Walsh~\citep{walsh2010empirical} empirically shows ease of manipulating an STV 
(single transferable vote) election  
-- one of the very few voting rules 
where manipulation even by one voter is in \NPCshort{}~\citep{bartholdi1991single}. 
In addition to the results mentioned above, there exist many other results in 
the literature that emphasize the weakness of considering computational complexity as a barrier 
\citep{conitzer2006nonexistence,xia2008sufficient,xia2008generalized,faliszewski2010ai,walsh2011computational}.
Hence, the barrier of computational hardness is ineffective against manipulation in many settings.

\subsection{Motivation}

In a situation where multiple attempts for prevention of manipulation fail to provide a fully satisfactory solution, detection of manipulation is a natural next step of research. There have been scenarios where establishing the occurrence of manipulation is straightforward, by observation or hindsight. For example, in sport, there have been occasions where the very structure of the rules of the game have encouraged teams to deliberately lose their matches. Observing such occurrences in, for example, football (the 1982 FIFA World Cup football match played between West Germany and Austria) and badminton (the quarter-final match between South Korea and China in the London 2012 Olympics), the relevant authorities have subsequently either changed the rules of the game (as with football) or disqualified the teams in question (as with the badminiton example). The importance of detecting manipulation lies in the potential for implementing corrective measures in the future. For reasons that will be evident soon, it is not easy to formally define the notion of manipulation detection. Assume that we have the votes from an
election that has already happened. A voter is potentially a manipulator if
there exists a preference $\succ$, different from the voter's reported preference,
which is such that the voter had an ``incentive to deviate'' from the former. 
Specifically, suppose the candidate
who wins with respect to this voter's reported preference is preferred (in $\succ$) over the 
candidate who wins with respect to $\succ$. In such a situation, $\succ$ could 
potentially be the voter's truthful preference, and the voter could be 
refraining from being truthful because an untruthful vote leads to a
more favorable outcome with respect to $\succ$. Note that we do not (and indeed, cannot) conclusively suggest that a particular voter 
has manipulated an election. This is because the said voter can always claim that she voted truthfully; 
since her actual preference is only known to her, there is no way to prove or disprove 
such a claim. Therefore, we are inevitably limited to asking only 
whether or not a voter has \textit{possibly} manipulated an election.

Despite this technical caveat, it is clear that efficient detection of manipulation, even if it is only possible manipulation, is potentially of interest in practice. We believe that, the information whether a certain group of voters have possibly manipulated an election or not would be very useful to social planners. For example, the organizers of an event, say London 2012 Olympics, maybe very interested to have this information. Also, in settings where data from many past elections (roughly over a fixed set of voters) is readily available, it is conceivable that possible manipulation could serve as suggestive evidence of real manipulation. Aggregate data about possible manipulations, although formally inconclusive, could serve as an important evidence of real manipulation\ignore{, especially in situations where the instances of possible manipulation turn out to be statistically significant. Thus, efficient detection of possible manipulation would provide a very useful input to a social planner for future elections}. We remark that having a rich history is typically not a problem, 
particularly for AI related applications, 
since the data generated from an election is normally kept for future 
requirements (for instance, for data mining or learning). For example, several past affirmatives for possible manipulation is one possible way of formalizing the notion of erratic past behavior. Also, applications where benefit of doubt maybe important, for example, elections in judiciary systems, possible manipulation detection seems useful. Thus the computational problem of detecting possible manipulation is of definite interest in this setting.



\subsection{Contributions} 

The novelty of this paper is in initiating research on {\em detection
of possible manipulators\/} in elections. We formulate four pertinent computational
problems in this context:
\begin{itemize}
 \item CPM: In the {\em coalitional possible manipulators\/} problem, 
  we are interested in whether or not a given subset of voters is a possible coalition of manipulators [Definition\nobreakspace \ref {cpmprobdef}]. 
 \item CPMW: The {\em coalitional possible manipulators given winner\/} is  
  the CPM problem with the additional information about who the winner would have been if the possible manipulators
   had all voted truthfully [Definition\nobreakspace \ref {CPMWdef}].
 \item CPMS, CPMSW: In CPMS ({\em Coalitional Possible Manipulators Search\/}), we want to know, whether there exists any coalition of possible manipulators of a size at most $k$ [Definition\nobreakspace \ref {cpmoprobdef}]. Similarly, we define CPMSW ({\em Coalitional Possible Manipulators Search given
Winner\/}) [Definition\nobreakspace \ref {cpmwoprobdef}].
\end{itemize}
Our specific findings are as follows.
\begin{itemize}
 \item We show that all the four problems above,
for scoring rules and the maximin voting rule, 
 are in \Pshort{} when the coalition size is one [Theorem\nobreakspace \ref {PUMScoringRuleP} and Theorem\nobreakspace \ref {PUMMaximinP}].
 \item We show that all the four problems, for any coalition size, are in \Pshort{} for a wide class of scoring rules 
 which include the Borda voting rule [Theorem\nobreakspace \ref {CPMWScr}, Theorem\nobreakspace \ref {thm:CPMWOScr} and Corollary\nobreakspace \ref {bordakapp}]. 
 \item We show that, for the Bucklin voting rule [Theorem\nobreakspace \ref {UPCMBucklin}], both the CPM and CPMW problems 
 are in \Pshort{}. The CPMS and CPMSW problems for the Bucklin voting rule are also in \Pshort{}, when we have 
 maximum possible coalition size $k=O(1)$.
 \item We show that both the CPM and the CPMW problems are in \NPCshort{} for the 
 STV voting rule [Theorem\nobreakspace \ref {UPCMSTVNPC} and Corollary\nobreakspace \ref {CPMWSTVNPC}]. 
 We also prove that the CPMW problem is in \NPCshort{} for maximin voting rule [Theorem\nobreakspace \ref {maxmincpmwnpc}]. 
\end{itemize}We observe that all the four problems 
are computationally easy for many voting rules that we study in this paper. This can be taken as a positive result. 
The results for the CPM and the CPMW problems are summarized in Table 1.

\begin{table}[htbp]
  \begin{center}
  {\renewcommand{\arraystretch}{1.7}
\begin{tabular}{|c|c|c|c|c|}\hline
  Voting Rule	& CPM,$c=1$	&CPM	  	&CPMW,$c=1$	&CPMW	 	\\\hline\hline
  Scoring Rules	& \Pshort{}	&?		&\Pshort{}	&?		\\\hline
  Borda		& \Pshort{}	& \Pshort{}	& \Pshort{}	&\Pshort{}	\\\hline 
  $k$-approval	& \Pshort{}	& \Pshort{}	& \Pshort{}	& \Pshort{}	\\\hline 
  Maximin	& \Pshort{}	&?		&\Pshort{}	&\NPCshort{}\\\hline
  Bucklin	& \Pshort{}	& \Pshort{}	& \Pshort{}	& \Pshort{}	\\\hline 
  STV		& \NPCshort{}	& \NPCshort{}	& \NPCshort{}	& \NPCshort{}	\\\hline
 \end{tabular}}
 \caption{\normalfont Results for CPM and CPMW ($c$ denotes coalition size). The `?' mark means that the problem is open.}
 \end{center} 
\end{table} 

This paper is a significant extension of the conference version of this work~\cite{detection}: this extended version includes all the proofs.

\paragraph{Organization}

The rest of the paper is organized as follows. We describe the necessary preliminaries in Section~\ref{sec:prelim}; we formally define the 
computational problems in Section~\ref{sec:prob_def}; we present the results in Section~\ref{sec:results} and finally we conclude in Section~\ref{sec:con}.

\section{Preliminaries}\label{sec:prelim}

Let $\mathcal{V}=\{v_1, \dots, v_n\}$ be the set of all \emph{voters} and $\mathcal{C}=\{c_1, \dots, c_m\}$ 
the set of all \emph{candidates}. 
Each voter $v_i$'s \textit{vote} is a \emph{preference} $\succ_i$ over the 
candidates which is a linear order over $\mathcal{C}$. 
For example, for two candidates $a$ and $b$, $a \succ_i b$ means that the voter $v_i$ prefers $a$ to $b$. 
We will use $a >_i b$ to denote the fact that $ a \succ_i b, a \ne b$. 
We denote the set of all linear orders over $\mathcal{C}$ by $\mathcal{L(C)}$. 
Hence, $\mathcal{L(C)}^n$ denotes the set of all $n$-voters' preference profile $(\succ_1, \cdots, \succ_n)$. 
We denote the $(n-1)$-voters' preference profile by $(\succ_1, \cdots, \succ_{i-1}, \succ_{i+1}, \cdots, \succ_n)$ 
by $\succ_{-i}$. 
We denote the set $\{1, 2, 3, \dots\}$ by 
$\mathbb{N}^+$. The power set of $\mathcal{C}$ is denoted by $2^\mathcal{C}$, and $\emptyset$ 
denotes the empty set. 
A map $r_c:\cup_{n,|\mathcal{C}|\in\mathbb{N}^+}\mathcal{L(C)}^n\longrightarrow 2^\mathcal{C}\setminus\{\emptyset\}$
is called a \emph{voting correspondence}. A map 
$t:\cup_{|\mathcal{C}|\in\mathbb{N}^+}2^\mathcal{C}\setminus\{\emptyset\}\longrightarrow \mathcal{C}$ is called a \emph{tie breaking rule}.
Commonly used tie breaking rules are \emph{lexicographic} tie breaking rules where ties are broken 
according to a predetermined preference $\succ_t \in \mathcal{L(C)}$. 
A \emph{voting rule} is $r=t\circ r_c$, where $\circ$ denotes composition of mappings. 
Given an election $E$, we can construct a weighted graph $G_E$ called 
\textit{weighted majority graph} from $E$. The set of vertices in $G_E$ is the set of candidates in $E$. 
For any two candidates $x$ and $y$, the weight on the edge $(x,y)$ is $D_E(x,y) = N_E(x,y) - N_E(y,x)$, 
where $N_E(x,y)(\text{respectively }N_E(y,x))$ is the number of voters who prefer $x$ to $y$ (respectively $y$ to $x$). 
Some examples of common voting correspondences are as follows.
\begin{itemize}
 \item {\bf Positional scoring rules:} Given an $m$-dimensional vector $\vec{\alpha}=\left(\alpha_1,\alpha_2,\dots,\alpha_m\right)\in\mathbb{R}^m$ 
 with $\alpha_1\ge\alpha_2\ge\dots\ge\alpha_m$ and $\alpha_1 > \alpha_m$, we can naturally define a
 voting rule - a candidate gets score $\alpha_i$ from a vote if it is placed at the $i^{th}$ position, and the 
 score of a candidate is the sum of the scores it receives from all the votes. 
 The winners are the candidates with maximum score. A scoring rule is called a 
 strict scoring rule if $\alpha_1>\alpha_2>\dots>\alpha_m$.
 For $\alpha=\left(m-1,m-2,\dots,1,0\right)$, we get the \emph{Borda} voting rule. With $\alpha_i=1$ $\forall i\le k$ and $0$ else, 
 the voting rule we get is known as the $k$-\emph{approval} voting rule. The \emph{plurality} voting rule is the $1$-\emph{approval} voting rule and the \emph{veto} voting rule is the $(m-1)$-\emph{approval} voting rule. 
 \item {\bf Maximin:} The maximin score of a candidate $x$ is $min_{y\ne x} D(x,y)$. The winners are the candidates with maximum maximin score.
 \item {\bf Bucklin:} A candidate $x$'s Bucklin score is the minimum number $l$ such that at least half 
 of the voters rank $x$ in their top $l$ positions. The winners are the candidates with lowest Bucklin score. This voting rule is 
 also sometimes referred as the simplified Bucklin voting rule.
 \item {\bf Single Transferable Vote:} In Single Transferable Vote (STV), 
 a candidate with least plurality score is dropped out of the election and its votes 
 are transferred to the next preferred candidate. If two or more candidates receive least plurality score, then some predetermined tie breaking rule is used. The candidate that remains after $(m-1)$ rounds is the winner.
\end{itemize}

\section{Problem Formulation}\label{sec:prob_def}

Consider an election that has already happened in which all the votes are known and thus the 
winner $x\in\mathcal{C}$ is also known. We call the candidate $x$ the {\em current winner} of the election. 
The authority may suspect that the voters belonging to $M\subset \mathcal{V}$ have formed a coalition among themselves and manipulated the 
election by voting non-truthfully. The authority believes that other voters who do not belong to $M$, have voted truthfully. 
We denote the coalition size $|M|$ by $c$. 
Suppose the authority has auxiliary information, 
maybe from some other sources, which says that the \textit{actual winner} should have been some candidate 
$y\in \mathcal{C}$ other than $x$. 
We call a candidate {\em actual winner} if it wins the election where all the voters vote truthfully. This means that the authority
thinks that, had the voters in $M$ voted truthfully, the candidate $y$ would have been the winner.  
We remark that there are practical situations, for example, 1982 FIFA World cup or 2012 London Olympics, where 
the authority knows the actual winner. 
This situation is formalized below.

\begin{definition}\label{cpmwdef}
Let $r$ be a voting rule, and $(\succ_i)_{i\in \mathcal{V}}$ be a voting profile of a set $\cal V$ of $n$ voters. Let $x$ be the winning candidate with respect to $r$ for this profile. For a candidate $y \neq x$, $M\subset \mathcal{V}$ is said to be a coalition of possible manipulators against $y$ with respect to $r$ if there exists a $|M|$-voters' profile 
 $(\succ_j^{\prime})_{j\in M} \in \mathcal{L(C)}^{|M|}$ such that $x \succ_j^\prime y, \forall j\in M$, and further,
$r((\succ_j)_{j\in \mathcal{V}\setminus M},(\succ_i^{\prime})_{i\in M}) = y$.  
\end{definition} 

Using the notion of coalition of possible manipulators, we formulate a computational problem called {\em Coalitional Possible Manipulators given Winner (CPMW)} as follows.

\begin{definition}{\bf(CPMW Problem)}\label{CPMWdef}\\
Given a voting rule $r$, a preference profile $(\succ_i)_{i\in \mathcal{V}}$ of a set of voters $\mathcal{V}$ over a set of candidates $\mathcal{C}$, a subset of voters $M\subset \mathcal{V}$, and a candidate $y$, determine if $M$ is a coalition of possible manipulators against $y$ with respect to $r$.
\end{definition}

In the CPMW problem, the actual winner is given in the input. However, 
it may very well happen that the authority does not have any other information to 
guess the \textit{actual winner} - the candidate who would have won the election had the voters in $M$ 
voted truthfully. In this situation, the authority is interested in knowing whether there 
is a $|M|$-voter profile which along with the votes in $\mathcal{V}\setminus M$ makes some candidate 
$y\in \mathcal{C}$ the winner who is different from the current winner $x\in \mathcal{C}$ and all 
the preferences in the $|M|$-voters' profile prefer $x$ to $y$. If such a $|M|$-voter profile exists for the subset of voters $M$, then we call $M$ a \textit{coalition of possible manipulators} and the 
corresponding computational problem is called \textit{Coalitional Possible Manipulators (CPM)}. 
These notions are formalized below.

\begin{definition}
\label{cpmdef}
Let $r$ be a voting rule, and $(\succ_i)_{i\in \mathcal{V}}$ be a voting profile of a set $\cal V$ of $n$ voters. A subset of voters $M\subset \mathcal{V}$ is called a coalition of possible manipulators with respect to $r$ if $M$ is a coalition of possible manipulators against some candidate $y$ with respect to $r$.
\end{definition}

\begin{definition}{\bf(CPM Problem)}\label{cpmprobdef}\\
Given a voting rule $r$, a preference profile $(\succ_i)_{i\in \mathcal{V}}$ of a set of voters $\mathcal{V}$ over a set of candidates $\mathcal{C}$, and a subset of voters $M\subset \mathcal{V}$, determine if $M$ is a coalition of possible manipulators with respect to $r$.
\end{definition}

In both the CPMW and CPM problems, a subset of voters which the authority suspect to be a coalition of manipulators, is given in the input. However, there can be situations where there is no specific subset of voters to suspect. In those scenarios, it may still be useful to know, what are the possible coalition of manipulators of size less than some number $k$. Towards that end, we extend the CPMW and CPM problems to search for a coalition of potential possible manipulators and call them {\em Coalitional  Possible Manipulators Search given Winner (CPMSW)} and {\em Coalitional Possible Manipulators Search (CPMS)} respective.

\begin{definition}{\bf(CPMSW Problem)}\label{cpmwoprobdef}\\
Given a voting rule $r$, a preference profile $(\succ_i)_{i\in \mathcal{V}}$ of a set of voters $\mathcal{V}$ over a set of candidates $\mathcal{C}$, a candidate $y$, and an integer $k$, determine whether there exists any $M\subset \mathcal{V}$ with $|M|\le k$ such that $M$ is a coalition of possible manipulators against $y$.
\end{definition}

\begin{definition}{\bf(CPMS Problem)}\label{cpmoprobdef}\\
Given a voting rule $r$, a preference profile $(\succ_i)_{i\in \mathcal{V}}$ of a set of voters $\mathcal{V}$ over a set of candidates $\mathcal{C}$, and an integer $k$, determine whether there exists any $M\subset \mathcal{V}$ with $|M|\le k$ such that $M$ is a coalition of possible manipulators.
\end{definition}

\subsection{Discussion} 

The CPMW problem may look very similar to the manipulation problem~\citep{bartholdi1989computational,conitzer2007elections}- in both the problems a set of voters try to make a candidate winner. However, in the CPMW problem, the actual winner must be less preferred to the current winner. Although it may look like a subtle difference, it changes the nature and complexity theoretic behavior of the problem completely. For example, we show that all the four problems have an efficient algorithm for a large class of voting rules that includes the Borda voting rule for which the manipulation problem is in \NPCshort{}, even when we have at least two manipulators~\citep{davies2011complexity,betzler2011unweighted}. Another important difference is that the manipulation problem, in contrast to the problems studied in this paper, does not take care of manipulators' preferences. We believe that there does not exist any formal reduction between the CPMW problem and the manipulation problem. 

On the other hand, the CPMS problem is similar to the margin of victory problem defined by Xia~\citep{xia2012computing}, where also we are looking for changing the current winner by changing at most some $k$ number of votes, which in turn identical to the destructive bribery problem~\citep{faliszewski2009hard}. Whereas, in the CPMS problem, the vote changes can occur in a restricted fashion. Here also, the margin of victory problem has the hereditary property which the CPMS problem does not possess. These two problems do not seem to have any obvious complexity theoretic implications.

Now, we explore the connections among the four problems that we study here. Notice that, a polynomial time algorithm for the CPM and the CPMW problems gives us a polynomial time algorithm for the CPMS and the CPMSW problems for any maximum possible coalition size $k=O(1)$. Also, observe that, a polynomial time algorithm for the CPMW (respectively CPMSW) problem implies a polynomial time algorithm for the CPM (respectively CPMS) problem. Hence, we have the following propositions.

\begin{proposition}\label{cpmcpmw}
 For every voting rule, if the maximum possible coalition size $k=O(1)$, then,
 \[CPMW\in\text{\Pshort{}} \Rightarrow CPM, CPMSW, CPMS\in\text{\Pshort{}}\]
\end{proposition}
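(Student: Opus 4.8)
The plan is to establish the chain of implications by exhibiting three simple reductions, each of which calls the assumed polynomial-time CPMW decision procedure as a subroutine only polynomially many times. Throughout, let $x$ denote the current winner, obtained by evaluating $r$ on the full profile $(\succ_i)_{i\in\mathcal{V}}$.

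First I would handle $CPMW \in \Pshort{} \Rightarrow CPM \in \Pshort{}$. By Definition~\ref{cpmdef}, a set $M$ is a coalition of possible manipulators exactly when it is a coalition of possible manipulators against \emph{some} candidate $y \ne x$. Since there are at most $m-1$ candidates distinct from $x$, I would loop over every such $y$ and invoke the CPMW algorithm on $(r, (\succ_i)_{i\in\mathcal{V}}, M, y)$, accepting iff at least one call accepts. This makes at most $m$ oracle calls, so the procedure is polynomial. Notice this step does not use the bound on $k$, consistent with the implication already stated in the discussion.

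Next I would prove the two search implications by exhaustive enumeration of candidate coalitions. For CPMSW the target candidate $y$ is part of the input, so I would enumerate every subset $M \subseteq \mathcal{V}$ with $|M| \le k$ and test each with the CPMW algorithm, accepting iff some subset passes. The number of such subsets is $\sum_{i=0}^{k}\binom{n}{i} = O(n^k)$, and here the hypothesis $k = O(1)$ is essential: it is exactly what makes this count polynomial in $n$. For CPMS the target candidate is not supplied, so I would combine both ideas, enumerating every $M$ with $|M| \le k$ and, for each, running the CPM procedure built in the first step (equivalently, looping over all $y \ne x$ and calling CPMW directly). The outer loop contributes an $O(n^k)$ factor and the inner loop an $O(m)$ factor, so the total running time remains polynomial.

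Since each reduction makes only polynomially many calls to polynomial-time subroutines, CPM, CPMSW, and CPMS all lie in \Pshort{}. I do not anticipate a genuine obstacle: the one point deserving explicit mention is that the $O(n^k)$ bound on the number of bounded-size coalitions is polynomial precisely because $k$ is taken to be constant---discarding the $k = O(1)$ assumption would invalidate the two search reductions, which is exactly why it appears in the hypothesis.
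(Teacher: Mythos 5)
Your proposal is correct and follows exactly the reasoning the paper itself gives (in the discussion preceding the proposition): reduce CPM to CPMW by iterating over the at most $m-1$ candidates $y\ne x$, and reduce the search versions to the decision versions by enumerating the $O(n^k)$ coalitions of size at most $k$, which is polynomial precisely because $k=O(1)$. Nothing is missing.
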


\begin{proposition}\label{prop:cpmocpmwo}
 For every voting rule,
 \[CPMSW\in\text{\Pshort{}} \Rightarrow CPMS\in\text{\Pshort{}}\]
\end{proposition}

\section{Results}\label{sec:results}

In this section, we present our algorithmic results for the CPMW, CPM, CPMSW, and CPMS problems for various 
voting rules.

\subsection{Scoring Rules}

Below we have certain lemmas which form a crucial ingredient of our algorithms. To begin with, we define the notion of a {\em manipulated preference}. Let $r$ be a scoring rule and $\succ:=(\succ_i,\succ_{-i})$ be a voting profile of $n$ voters. Let $\succ_i^{\prime}$ be a preference such that $$r(\succ) >_i^{\prime}  r(\succ_i^{\prime},\succ_{-i}).$$ Then, we say that $\succ_i^{\prime}$ is a $(\succ,i)$-manipulated preference with respect to $r$. We omit the reference to $r$ if it is clear from the context.

\begin{lemma}\label{losersorted}
Let $r$ be a scoring rule and $\succ:=(\succ_i,\succ_{-i})$ be a voting profile of $n$ voters.  Let $a$ and $b$ be two candidates such that $score_{\succ_{-i}}(a) > score_{\succ_{-i}}(b)$, and let $\succ_i^{\prime}$ be $(\succ,i)$-manipulated preference where $a$ precedes $b$:
$$\succ_i^{\prime}:= \cdots > a > \cdots > b > \cdots$$
If $a,b$ are not winners with respect to either $(\succ_i^{\prime},\succ_{-i})$ or  $\succ$, then the preference $\succ_i^{\prime\prime}$ obtained from $\succ_i^{\prime}$ by interchanging $a$ and $b$ is also $(\succ,i)$-manipulated. 
\end{lemma}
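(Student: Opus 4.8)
The plan is to reduce the claim to showing that the transposition does not change the election winner. Write $w := r(\succ)$ for the current winner and $w^\prime := r(\succ_i^{\prime},\succ_{-i})$ for the winner after voter $i$ casts $\succ_i^{\prime}$; the hypothesis that $\succ_i^{\prime}$ is $(\succ,i)$-manipulated is exactly $w >_i^{\prime} w^\prime$, and by assumption $w,w^\prime \notin \{a,b\}$. Since $\succ_i^{\prime\prime}$ is obtained from $\succ_i^{\prime}$ only by interchanging $a$ and $b$, if I can establish that $r(\succ_i^{\prime\prime},\succ_{-i}) = w^\prime$, then I am done: transposing two candidates, neither of which is $w$ or $w^\prime$, leaves the relative order of $w$ and $w^\prime$ intact, so $w >_i^{\prime} w^\prime$ immediately gives $w >_i^{\prime\prime} w^\prime = r(\succ_i^{\prime\prime},\succ_{-i})$, which is precisely the definition of $\succ_i^{\prime\prime}$ being $(\succ,i)$-manipulated.

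First I would analyse how the swap affects scores. Let $A := score_{\succ_{-i}}(a)$ and $B := score_{\succ_{-i}}(b)$, so that $A > B$, and let $\alpha_p \ge \alpha_q$ be the points $a$ and $b$ receive from $\succ_i^{\prime}$ (with $\alpha_p \ge \alpha_q$ because $a$ precedes $b$ there). Every candidate other than $a$ and $b$ occupies the same position in $\succ_i^{\prime}$ and $\succ_i^{\prime\prime}$, so its total score is unchanged; the scores of $a$ and $b$ change from $(A+\alpha_p,\, B+\alpha_q)$ to $(A+\alpha_q,\, B+\alpha_p)$. A short computation from $A>B$ and $\alpha_p\ge\alpha_q$ shows that $A+\alpha_p$ is the strict maximum of these four quantities: it is strictly larger than both $B+\alpha_q$ and $B+\alpha_p$, and at least $A+\alpha_q$.

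From this I would argue that the winner is unchanged. Because $w^\prime$ was the winner under $\succ_i^{\prime}$, its score (which is the same in both profiles) is at least every score under $\succ_i^{\prime}$, in particular $score_{(\succ_i^{\prime},\succ_{-i})}(w^\prime) \ge A+\alpha_p$; hence under $\succ_i^{\prime\prime}$ every candidate still has score at most $score(w^\prime)$, so $w^\prime$ remains a top-scoring candidate. Moreover, the set of top-scoring candidates can only shrink under the swap: $b$ leaves it for certain, since its new score is strictly below $A+\alpha_p \le score(w^\prime)$; $a$ can remain only if it was already tied at the top under $\succ_i^{\prime}$; and no candidate outside $\{a,b\}$ is affected. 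With lexicographic tie-breaking, a candidate selected from a tied set is still selected from any subset that contains it, so the winner remains $w^\prime$, i.e.\ $r(\succ_i^{\prime\prime},\succ_{-i}) = w^\prime$, which completes the reduction above.

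The step I expect to be most delicate is precisely this winner-invariance: it is not enough that the maximum score is preserved; one must track the tied set and invoke the hypotheses that $a$ and $b$ are non-winners in both $\succ$ and $(\succ_i^{\prime},\succ_{-i})$, together with the tie-breaking convention. Dropping the non-winner assumption would permit the transposition to promote $a$ to winner (or otherwise alter the tie-break outcome), and the argument would fail.
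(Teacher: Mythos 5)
Your proposal is correct and follows essentially the same route as the paper's proof: bound the new scores of $a$ and $b$ by the (unchanged) score of the winner $w'$ of $(\succ_i^{\prime},\succ_{-i})$ using $score_{\succ_{-i}}(a)>score_{\succ_{-i}}(b)$ and the monotonicity of the score vector, observe that the tied set can only shrink while still containing $w'$ so lexicographic tie-breaking keeps $w'$ the winner, and then use the non-winner hypothesis on $a,b$ to see that the transposition preserves the relative order of $r(\succ)$ and $w'$ in voter $i$'s ballot. Your explicit tracking of the tied set just spells out the step the paper states tersely.
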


\begin{proof}
Let $x:= r(\succ_i^{\prime},\succ_{-i})$. If suffices to show that $x$ continues to win in the proposed profile $(\succ_i^{\prime\prime},\succ_{-i})$. To this end, it is enough to argue the scores of $a$ and $b$ with respect to $x$. First, consider the score of $b$ in the new profile:
 \begin{eqnarray}
  score_{(\succ_i^{\prime\prime},\succ_{-i})}(b) 
  &=& score_{\succ_i^{\prime\prime}}(b) + score_{\succ_{-i}}(b) \nonumber \\
  &<& score_{\succ_i^{\prime}}(a) + score_{\succ_{-i}}(a) \nonumber \\
  &=& score_{(\succ_i^{\prime},\succ_{-i})}(a) \nonumber \\
  &\le& score_{(\succ_i^{\prime},\succ_{-i})}(x) \nonumber \\
  &=& score_{(\succ_i^{\prime\prime},\succ_{-i})}(x) \label{c'notwin}
 \end{eqnarray}
 The second line uses the fact that 
 $score_{\succ_i^{\prime\prime}}(b) = score_{\succ_i^{\prime}}(a)$ and 
 $score_{\succ_{-i}}(b) < score_{\succ_{-i}}(a)$. The fourth line comes from 
 the fact that $x$ is the winner and the last line follows from the fact that 
 the position of $x$ is same in both profiles. Similarly, we have the following argument for the score of $a$ in the new profile (the second line below simply follows 
 from the definition of scoring rules).
 \begin{eqnarray}
  score_{(\succ_i^{\prime\prime},\succ_{-i})}(a) 
  &=& score_{\succ_i^{\prime\prime}}(a) + score_{\succ_{-i}}(a) \nonumber \\
  &\le& score_{\succ_i^{\prime}}(a) + score_{\succ_{-i}}(a) \nonumber \\
  &=& score_{(\succ_i^{\prime},\succ_{-i})}(a) \nonumber \\
  &\le& score_{(\succ_i^{\prime},\succ_{-i})}(x) \nonumber \\
  &=& score_{(\succ_i^{\prime\prime},\succ_{-i})}(x) \label{cnotwin}
 \end{eqnarray}
 Since the tie breaking rule is according to some predefined fixed order 
 $\succ_t  \in \mathcal{L(C)}$ and the candidates tied with winner in 
 $(\succ_i^{\prime\prime},\succ_{-i})$ also tied with winner in 
 $(\succ_i^{\prime},\succ_{-i})$, we have the following,
 \[ r(\succ) >_i^{\prime\prime} r(\succ_i^{\prime\prime},\succ_{-i})\]
\qed\end{proof}
\setcounter{equation}{0}

We now show that, if there is some $(\succ,i)$-manipulated preference with respect to a scoring rule $r$, then there exists a $(\succ,i)$-manipulated preference with a specific structure.

\begin{lemma}\label{algotheorem}
 Let $r$ be a scoring rule and $\succ:=(\succ_i,\succ_{-i})$ be a voting profile of $n$ voters. If there is some $(\succ,i)$-manipulated preference with respect to $r$, then there also exists a $(\succ,i)$-manipulated preference $\succ_i^{\prime}$ where the actual winner $y$ immediately follows the current winner $x$:
  $$\succ_i^{\prime}:= \cdots > x > y > \cdots $$
 and the remaining candidates are in nondecreasing ordered of their scores from ${\succ_{-i}}$.
\end{lemma}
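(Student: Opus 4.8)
The plan is to begin with an arbitrary $(\succ,i)$-manipulated preference $\succ_i^{\prime}$, write $x := r(\succ)$ for the current winner and $y := r(\succ_i^{\prime},\succ_{-i})$ for the actual winner, and observe that the manipulation property $x >_i^{\prime} y$ forces $x$ to precede $y$ in $\succ_i^{\prime}$. I would then reshape $\succ_i^{\prime}$ into the claimed canonical form in two phases: first bring $y$ up so that it sits immediately after $x$, and then sort the remaining $m-2$ candidates into nondecreasing order of their $\succ_{-i}$-scores, throughout maintaining that the preference is $(\succ,i)$-manipulated with the same actual winner $y$.

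For the first phase, I would move $y$ upward until it occupies the position immediately following $x$, leaving $x$ and every candidate above $x$ untouched. The only scores that change are that of $y$ (which weakly increases, as $y$ moves to a higher position) and those of the candidates originally lying strictly between $x$ and $y$ (each weakly decreases, having been pushed down one position); all other candidates, including $x$, retain their scores. Since no candidate other than $y$ gains and $y$ itself does not lose, $y$ remains a maximum-score candidate, and the set of candidates tied with $y$ at the top can only contract. Because the tie-break follows a fixed lexicographic order, $y$ therefore still wins, so $r(\succ_i^{(1)},\succ_{-i}) = y$ for the resulting preference $\succ_i^{(1)}$; as $x$ still precedes $y$, this $\succ_i^{(1)}$ is again $(\succ,i)$-manipulated.

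For the second phase, I would sort the candidates in $\mathcal{C}\setminus\{x,y\}$ by repeatedly invoking Lemma\nobreakspace\ref{losersorted}. None of these candidates is ever the winner of $\succ$ (which is $x$) or of the current manipulated profile (which is $y$), so the hypothesis of that lemma is satisfied for any two of them. Whenever two such candidates $a,b$ appear out of order, with $a$ preceding $b$ while $score_{\succ_{-i}}(a) > score_{\succ_{-i}}(b)$, Lemma\nobreakspace\ref{losersorted} lets me interchange them and still retain a manipulated preference whose winner is $y$. Crucially, the lemma permits interchanging candidates that are not adjacent in the preference, so the block $x > y$ sitting in the middle is no obstruction: I can view $\mathcal{C}\setminus\{x,y\}$ as a subsequence and run bubble sort on it, each adjacent-in-the-subsequence transposition being a single legal swap that keeps $x$ and $y$ fixed. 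When this terminates, the candidates other than $x,y$ appear in nondecreasing order of their $\succ_{-i}$-scores while $y$ still immediately follows $x$, which is exactly the asserted structure.

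I expect the delicate point to be the first phase rather than the second: the sorting is essentially delivered by Lemma\nobreakspace\ref{losersorted}, whereas lifting $y$ past the intermediate candidates must be argued from scratch, and the argument has to ensure that the lexicographic tie-break does not quietly hand victory to some other candidate. The monotonicity bookkeeping sketched above, namely that only $y$'s score can rise while the top-tie set only shrinks, is what closes that gap.
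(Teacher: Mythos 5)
Your proposal is correct and follows essentially the same route as the paper: the paper also combines Lemma~\ref{losersorted} (to sort the candidates other than $x,y$) with a direct monotonicity argument (only $y$'s score can rise, every other score weakly falls, and the fixed tie-break then keeps $y$ winning) for the step that lifts $y$ to the slot just after $x$. The only difference is that you perform the lift first and the sort second, whereas the paper sorts first and lifts second; since neither phase disturbs the invariant established by the other, the order is immaterial.
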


\begin{proof}
Let $\succ^{\prime\prime}$ be a $(\succ,i)$-manipulated preference with respect to $r$.  
 Let $x:= r(\succ), y:= r(\succ^{\prime\prime},\succ_{-i})$. From Lemma\nobreakspace \ref {losersorted}, 
 without loss of generality, we may assume that, all candidates except $x, y$ 
 are in nondecreasing order of $score_{\succ_{-i}}(.)$ in the preference $\succ^{\prime\prime}$. 
 If $\succ_i^{\prime\prime}:= \cdots \succ x \succ \cdots \succ y \succ \cdots \succ \cdots$, 
 we define $\succ_i^{\prime}:= \cdots \succ x \succ y \succ \cdots \succ \cdots$ from 
 $\succ_i^{\prime\prime}$ where 
 $y$ is moved to the position following $x$ and the position of the candidates in between 
 $x$ and $y$ in $\succ_i^{\prime\prime}$ is deteriorated by one position each. The position 
 of the rest of the candidates remain same in both $\succ_i^{\prime\prime}$ and $\succ_i^{\prime}$.
 Now we have following,
 \begin{eqnarray}
  score_{(\succ_i^{\prime},\succ_{-i})}(y) 
  &=& score_{\succ_i^{\prime}}(y) + score_{\succ_{-i}}(y) \nonumber \\
  &\ge& score_{\succ_i^{\prime\prime}}(y) + score_{\succ_{-i}}(y) \nonumber \\
  &=& score_{(\succ_i^{\prime\prime},\succ_{-i})}(y)
 \end{eqnarray}
 We also have,
 \begin{eqnarray}
  score_{(\succ_i^{\prime},\succ_{-i})}(a) 
  \le score_{(\succ_i^{\prime\prime},\succ_{-i})}(a), 
  \forall a \in \mathcal{C} \setminus \{y\}
 \end{eqnarray}
 Since the tie breaking rule is according to some predefined order $\succ_t  \in \mathcal{L(C)}$, 
 we have the following,
 \[ r(\succ) >_i^{\prime} r(\succ^{\prime},\succ_{-i}) \]
\qed\end{proof}

Using Lemmas\nobreakspace \ref {losersorted} and\nobreakspace  \ref {algotheorem}, we now present the results for the scoring rules.

\begin{theorem}\label{PUMScoringRuleP}
 When $c$=1, the CPMW, CPM, CPMSW, and CPMS problems for scoring rules are in \Pshort{}.
\end{theorem}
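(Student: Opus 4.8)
The plan is to reduce all four problems, for $c=1$, to the single-voter version of CPMW and then dispatch that one case using the structural lemmas already in hand. Since $c=1$ forces the search bound to be $k=1=O(1)$, Proposition~\ref{cpmcpmw} tells us it suffices to put CPMW in \Pshort{} when $M=\{v_i\}$: CPM then follows by running the CPMW routine over all $m-1$ candidates $y\neq x$; CPMSW follows by ranging over the $n$ choices of the single suspect voter (with $y$ fixed); and CPMS follows by ranging over both the $n$ voters and the $m-1$ targets $y$. Each of these uses only the $c=1$ CPMW routine a polynomial number of times, so the whole theorem rests on solving CPMW for a single suspect voter in polynomial time.

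First I would observe that, with $M=\{v_i\}$ and current winner $x=r(\succ)$, a preference $\succ_i^{\prime}$ witnesses CPMW against a target $y$ exactly when it is a $(\succ,i)$-manipulated preference whose resulting winner is precisely $y$: the two CPMW requirements $x\succ_i^{\prime} y$ and $r(\succ_i^{\prime},\succ_{-i})=y$ together say nothing more than $r(\succ)>_i^{\prime} r(\succ_i^{\prime},\succ_{-i})$. Deciding CPMW against $y$ is therefore deciding whether a $(\succ,i)$-manipulated preference making $y$ win exists, and here Lemma~\ref{algotheorem} does the heavy lifting: if any such preference exists, one exists in the canonical form where $x$ is immediately followed by $y$ and every remaining candidate is placed in nondecreasing order of its $\succ_{-i}$-score $s(\cdot):=score_{\succ_{-i}}(\cdot)$, which I compute once at the start.

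The algorithm then enumerates where the adjacent block $x>_i^{\prime}y$ sits. For each position $p\in\{1,\dots,m-1\}$ I would build the profile $\succ_i^{\prime}(p)$ that puts $x$ at position $p$, $y$ at position $p+1$, and fills the remaining positions with $\mathcal{C}\setminus\{x,y\}$ in nondecreasing $s(\cdot)$ order, recompute $r(\succ_i^{\prime}(p),\succ_{-i})$, and answer YES as soon as this winner equals $y$. Lemma~\ref{algotheorem} fixes the adjacency of $x,y$ and the ordering of the losers but not the block location $p$, which is exactly why all $m-1$ placements must be tried; conversely any $\succ_i^{\prime}(p)$ returning $y$ is a genuine witness, since $x$ stands above $y$. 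The forward direction of correctness is precisely Lemma~\ref{algotheorem}, the backward direction is immediate, and the running time is polynomial: $O(m)$ candidate profiles, each settled by one winner computation over $n$ votes. Feeding this routine into the reductions of the first paragraph then closes CPM, CPMS, and CPMSW.

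The point to handle with care—more a detail than a genuine obstacle, since Lemmas~\ref{losersorted} and~\ref{algotheorem} already carry the argument—is ties. Several ``nondecreasing'' arrangements of the leftover candidates are possible when some of them share a $\succ_{-i}$-score, but I would note that any fixed such arrangement produces the same multiset of remaining total scores, so the quantity $\max_{c\neq x,y}\left(s(c)+\alpha_{\mathrm{pos}(c)}\right)$ that $y$ must overcome is unchanged; a short remark on the lexicographic tie-break $\succ_t$ covers the rare case where $y$'s total coincides with this maximum, and this is the only place where the proof must touch the tie-breaking rule rather than the raw scores.
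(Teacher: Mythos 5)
Your proposal is correct and follows essentially the same route as the paper: reduce to single-voter CPMW via Proposition~\ref{cpmcpmw}, invoke Lemma~\ref{algotheorem} to restrict attention to canonical preferences with $y$ immediately after $x$ and the remaining candidates in nondecreasing $\succ_{-i}$-score order, and try all $m-1$ placements of the $x,y$ block. Your extra remarks on spelling out the derived reductions and on ties among equal-score candidates are fine but do not change the argument.
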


\begin{proof}
From Proposition\nobreakspace \ref {cpmcpmw}, it is enough to give a polynomial time algorithm for the CPMW problem. So consider the CPMW problem. We are given the actual winner $y$ and we compute the current winner $x$ with respect to $r$. Let $\succ_{[j]}$ be a preference where $x$ and $y$ are in positions $j$ and $(j+1)$ respectively, and the rest of the candidates are in nondecreasing order of the score that they receive from $\succ_{-i}$. For $j \in \{1,2,\ldots,m-1\}$, we check if $y$ wins with the profile $(\succ_{-i},\succ_{[j]})$. If we are successful with at least one $j$ we report YES, otherwise we say NO. The correctness follows from Lemma\nobreakspace \ref {algotheorem}. 
Thus we have a polynomial time algorithm for CPMW when $c=1$ and the theorem follows from Proposition\nobreakspace \ref {cpmcpmw}.
\qed\end{proof}

Now, we study the CPMW and the CPM problems when $c>1$. 
If $m=O(1)$, then both the CPMW and the CPM problems for any anonymous and efficient voting rule $r$ can be solved in polynomial time by iterating over all possible ${m!+c-1 \choose m!}$ ways the manipulators can have actual preferences. A voting rule is called efficient if winner determination under it is in \Pshort{}.

\begin{theorem}\label{CPMWScr}
 For scoring rules with $\alpha_1 - \alpha_2 \le \alpha_i - \alpha_{i+1}, \forall i$, 
 the CPMW and the CPM problems are in \Pshort{}.
\end{theorem}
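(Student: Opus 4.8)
The plan is to solve CPMW directly and obtain CPM for free, since a polynomial-time algorithm for CPMW yields one for CPM by running it once for each candidate $y \neq x$ as the putative actual winner. So fix the current winner $x$, a putative actual winner $y$, the coalition $M$ with $|M|=c$, and write $s_{-M}(\cdot)$ for the score contributed by the fixed votes of $\mathcal{V}\setminus M$. I expect the whole problem to collapse to a \emph{single} inequality test: whether
$$ s_{-M}(y) + c\,\alpha_2 \;\ge\; s_{-M}(x) + c\,\alpha_1, $$
read with the lexicographic tie-break in favour of $y$ at equality. The guiding intuition is that, since the manipulators must rank $x$ above $y$, the best thing they can do for $y$ is to make every constructed ballot of the shape $x > y > \cdots$: this maximises $y$'s score and, by the hypothesis $\alpha_1-\alpha_2 \le \alpha_i - \alpha_{i+1}$, simultaneously minimises the per-ballot advantage of $x$ over $y$.

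For \emph{necessity} I would show the displayed test is forced. In any admissible profile each ballot has $x$ above $y$, so if $x$ is in position $p$ and $y$ in position $q>p$ then $\alpha_p-\alpha_q = \sum_{i=p}^{q-1}(\alpha_i-\alpha_{i+1}) \ge \alpha_1-\alpha_2$ by the hypothesis, i.e. $\alpha_q-\alpha_p \le \alpha_2-\alpha_1$. Summing over the $c$ manipulators,
$$ \big(\text{final}(y)\big) - \big(\text{final}(x)\big) \;\le\; \big(s_{-M}(y)-s_{-M}(x)\big) + c(\alpha_2-\alpha_1). $$
Hence if the test fails, $y$ cannot overtake $x$ under \emph{any} admissible profile, so the answer is NO.

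For \emph{sufficiency}, suppose the test holds; I would exhibit an explicit admissible profile electing $y$. In each manipulator $j$'s ballot, place $x$ first and $y$ second, and arrange the remaining $m-2$ candidates in positions $3,\dots,m$ so that every one of them is weakly \emph{lower} than it was in $j$'s reported vote $\succ_j$. This is always possible: if the rivals occupy reported positions $p_1^j < \cdots < p_{m-2}^j$, then $p_i^j \le i+2$ (only $x,y$ can lie above them), so sending the $i$-th rival to position $i+2$ demotes each of them. Writing $O(z)=\sum_{j\in M}\mathrm{score}_{\succ_j}(z)$ and letting $L(z)$ be the score $z$ gets from the new ballots, this construction guarantees $L(z)\le O(z)$ for every $z\notin\{x,y\}$. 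I then chain inequalities: for any such $z$,
$$ \text{final}(z) = s_{-M}(z)+L(z) \le s_{-M}(z)+O(z) \le s_{-M}(x)+O(x) \le s_{-M}(x)+c\alpha_1 = \text{final}(x) \le \text{final}(y), $$
where the middle inequality is precisely the fact that $x$ is the \emph{current} winner, the next uses $O(x)\le c\alpha_1$, and the last is the test. Thus $y$ beats every candidate and wins.

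The crux I anticipate is this sufficiency chain, and specifically the realisation that the defining hypothesis of the problem — $x$ being the \emph{current} winner — is exactly what rescues CPMW from the intractability of coalitional manipulation (Borda satisfies the score hypothesis yet has NP-hard coalitional manipulation). Because $x$ already beats everyone in the reported profile, promoting $x$ to the top while demoting all rivals keeps every rival below $x$ automatically, so no delicate load-balancing across manipulators is required and the question reduces to the lone $x$-versus-$y$ comparison. The only remaining loose ends are the lexicographic tie-breaking in the three comparisons, which is handled exactly as in Lemmas~\ref{losersorted} and~\ref{algotheorem}, and the routine observation that the test and the construction run in polynomial time; iterating over all $y\neq x$ then settles CPM.
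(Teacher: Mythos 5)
Your proposal is correct and follows essentially the same route as the paper: both arguments construct the manipulator profile that puts $x$ first and $y$ second on every ballot, use the hypothesis $\alpha_1-\alpha_2\le\alpha_i-\alpha_{i+1}$ to show this is the extremal admissible profile for the $y$-versus-$x$ score gap, and use the fact that $x$ is the \emph{current} winner to keep every other candidate below $x$ (the paper phrases the last step as a contradiction via its normalization lemmas rather than your explicit inequality chain, but the content is the same). The one detail you defer --- a third candidate $z$ tying with $y$ --- does work out, since equality along your chain forces $z$ to tie with $x$ already in the reported profile (so $x \succ_t z$) while your test at equality gives $y \succ_t x$, whence $y \succ_t z$.
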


\begin{proof}
 We provide a polynomial time algorithm for the CPMW problem in this setting. 
 Let $x$ be the current winner and $y$ be the given actual winner. Let
 $M$ be the given subset of voters. Let $((\succ_i)_{i\in M}, (\succ_j)_{j\in V\setminus M})$ be the 
 reported preference profile. Without loss of generality, we assume that $x$ is the most preferred candidate 
 in every $\succ_i, i\in M$. Let us define $\succ_i^{\prime}, i\in M,$ by moving $y$ to the second position in the preference
 $\succ_i$. In the profile $((\succ_i^{\prime})_{i\in M}, (\succ_j)_{j\in V\setminus M})$, the winner is either $x$ or $y$ since 
 only $y$'s score has increased. We claim that $M$ is a coalition of possible manipulators with respect to $y$ 
 \textit{if and only if} $y$ is the winner in preference profile 
 $((\succ_i^{\prime})_{i\in M}, (\succ_j)_{j\in V\setminus M})$. This can be seen as follows. Suppose there exist 
 preferences $\succ_i^{\prime\prime}, $ with $x \succ_i^{\prime\prime} y, i\in M,$ for which $y$ wins in the profile 
 $((\succ_i^{\prime\prime})_{i\in M}, (\succ_j)_{j\in V\setminus M})$. Now without loss of generality, we can assume that 
 $y$ immediately follows $x$ in all $\succ_i^{\prime\prime}, i\in M,$ and  
 $\alpha_1 - \alpha_2 \le \alpha_i - \alpha_{i+1}, \forall i$ implies that we can also assume that $x$ and $y$ are in the 
 first and second positions respectively in all $\succ_i^{\prime\prime}, i\in M$. Now in both the profiles, 
 $((\succ_i^{\prime})_{i\in M}, (\succ_j)_{j\in V\setminus M})$ and $((\succ_i^{\prime\prime})_{i\in M}, (\succ_j)_{j\in V\setminus M})$, the score of $x$ and $y$ are same. But in the first profile $x$ wins and in the second profile $y$ wins, which is a contradiction.
\qed\end{proof}

\begin{theorem}\label{thm:CPMWOScr}
 For scoring rules with $\alpha_1 - \alpha_2 \le \alpha_i - \alpha_{i+1}, \forall i$, 
 the CPMSW and the CPMS problems are in \Pshort{}.
\end{theorem}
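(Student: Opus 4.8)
The plan is to prove $CPMSW\in\text{\Pshort{}}$ directly and then obtain $CPMS\in\text{\Pshort{}}$ from Proposition\nobreakspace \ref {prop:cpmocpmwo}. Fix the current winner $x$ and the given actual winner $y$, and recall that whether a coalition $M$ succeeds depends only on the reported votes of $\mathcal{V}\setminus M$; the reported votes inside $M$ enter only through determining $x$, which is fixed. Mirroring the proof of Theorem\nobreakspace \ref {CPMWScr}, I would first argue that for \emph{any} fixed coalition $M$ it is optimal for every manipulator $j\in M$ to adopt the canonical preference $\succ_j^\star$ that places $x$ first, $y$ second, and keeps all remaining candidates in the same relative order as in $j$'s reported vote $\succ_j$. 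Let $p_j(c)$ denote the position of candidate $c$ in $\succ_j$.

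Two observations drive this. First, since $x$ wins the reported profile, replacing the votes of $M$ by the canonical preferences only raises the scores of $x$ and $y$ and never raises the score of any other candidate: each $z\notin\{x,y\}$ occupies in $\succ_j^\star$ a position that is numerically at least as large as $p_j(z)$, so its score does not increase. Hence the winner of the modified profile is necessarily $x$ or $y$, and $y$ wins if and only if $score(y)\ge score(x)$ under the tie-breaking order. Second, the hypothesis $\alpha_1-\alpha_2\le\alpha_i-\alpha_{i+1},\forall i$ implies that, subject to the constraint $x\succ_j^\prime y$, the per-voter contribution $\alpha_{p_j^\prime(y)}-\alpha_{p_j^\prime(x)}$ to $score(y)-score(x)$ is maximised precisely when $x,y$ occupy the first two positions, giving value $\alpha_2-\alpha_1$; this is where Lemma\nobreakspace \ref {algotheorem} together with the $\alpha$-condition is used, exactly as in Theorem\nobreakspace \ref {CPMWScr}. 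Combining the two: if any admissible truthful votes make $y$ win then they satisfy $score(y)\ge score(x)$, and the canonical choice does at least as well on $score(y)-score(x)$, so by the first observation the canonical choice also makes $y$ win. Therefore $M$ is a coalition against $y$ if and only if $D+\sum_{j\in M} g_j\ge 0$ (ties broken in favour of $y$), where $D=score_{\succ}(y)-score_{\succ}(x)$ and $g_j=(\alpha_2-\alpha_1)-(\alpha_{p_j(y)}-\alpha_{p_j(x)})$ is the gain from converting voter $j$.

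This reduces CPMSW to maximising $D+\sum_{j\in M} g_j$ over all $M$ with $|M|\le k$, a cardinality-constrained selection solved greedily: compute every $g_j$, discard the voters with $g_j\le 0$, and take the (at most) $k$ voters with the largest $g_j$. I would output YES iff the resulting value is nonnegative, i.e. strictly positive, or zero with $y$ ahead of $x$ in the tie-breaking order. The greedy is optimal because the objective is additive in the chosen voters and each voter is independently beneficial or not, and there is never a reason to use the full budget on a nonpositive $g_j$. Its correctness as an algorithm for CPMSW is exactly the equivalence established in the previous paragraph. Finally, $CPMS\in\text{\Pshort{}}$ follows immediately from Proposition\nobreakspace \ref {prop:cpmocpmwo}.

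The step I expect to require the most care is the first observation: verifying that the canonical conversion never increases any third candidate's score, so that the winner is pinned to $\{x,y\}$ and the global winning condition collapses to the single comparison of $score(y)$ against $score(x)$. This pinning is what makes the coalition selection decouple into an additive greedy; without it one would face the much harder task of simultaneously keeping $y$ ahead of \emph{every} candidate while choosing which reported votes to replace.
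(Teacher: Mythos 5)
Your proposal is correct and follows essentially the same route as the paper: reduce to CPMSW via Proposition~\ref{prop:cpmocpmwo}, compute for each voter the gain $\Delta(v)=\alpha_2-\alpha_1-\alpha_{p(y)}+\alpha_{p(x)}$ from converting that vote to the canonical ``$x$ first, $y$ second'' form, sort, and greedily check whether the top at most $k$ gains close the deficit $s(x)-s(y)$. Your write-up in fact makes explicit the correctness argument (pinning the winner to $\{x,y\}$ and the optimality of the canonical positions under the $\alpha$-condition) that the paper only cites by reference to Theorem~\ref{CPMWScr}.
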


\begin{proof}
 From Proposition\nobreakspace \ref {prop:cpmocpmwo}, it is enough to prove that $CPMSW \in \mathcal{P}$. Let $x$ be the current winner, $y$ be the given actual winner and $s(x)$ and $s(y)$ be their current respective scores. For each vote $v\in \mathcal{V}$, we compute a number $\Delta(v) = \alpha_2 - \alpha_j - \alpha_1 + \alpha_i$, where $x$ and $y$ are receiving scores $\alpha_i$ and $\alpha_j$ respectively from the vote $v$. Now, we output yes iff there are $k$ votes $v_i, 1\le i\le k$ such that, $\sum_{i=1}^k \Delta(v_i) \ge s(x) - s(y)$, which can be checked easily by sorting the $\Delta(v)$'s in nonincreasing order and checking the condition for the first $k$ $\Delta(v)$'s, where $k$ is the maximum possible coalition size specified in the input. The proof of correctness follows by exactly in the same line of argument as the proof of Theorem\nobreakspace \ref {CPMWScr}.
\qed\end{proof}

For the plurality voting rule, we can solve all the problems easily using max flow. Hence, from Theorem\nobreakspace \ref {CPMWScr} and Theorem\nobreakspace \ref {thm:CPMWOScr}, we have the following result.

\begin{corollary}\label{bordakapp}
 The CPMW, CPM, CPMSW, and CPMS problems for the Borda and $k$-approval voting rules are in \Pshort{}.
\end{corollary}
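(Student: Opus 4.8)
The plan is to show that the corollary follows almost immediately by verifying that both voting families fall under the scope of Theorem~\ref{CPMWScr} and Theorem~\ref{thm:CPMWOScr}, with the single exception of plurality, which is handled by the max-flow remark preceding the statement. The key observation is that the hypothesis $\alpha_1 - \alpha_2 \le \alpha_i - \alpha_{i+1}$ for all $i$ is a ``convexity at the top'' condition on the score vector, so I would simply check it case by case.

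First I would treat Borda. Here $\vec{\alpha} = (m-1, m-2, \ldots, 1, 0)$, so every consecutive gap equals $\alpha_i - \alpha_{i+1} = 1$; in particular $\alpha_1 - \alpha_2 = 1 = \alpha_i - \alpha_{i+1}$ for all $i$, and the hypothesis holds with equality. Thus Borda is a scoring rule of the required type, and Theorem~\ref{CPMWScr} together with Theorem~\ref{thm:CPMWOScr} places all four problems in \Pshort{}. Next I would treat $k$-approval for $k \ge 2$: there $\alpha_1 = \alpha_2 = 1$, hence $\alpha_1 - \alpha_2 = 0$, and since the score vector is nonincreasing every gap $\alpha_i - \alpha_{i+1}$ is nonnegative, so $\alpha_1 - \alpha_2 = 0 \le \alpha_i - \alpha_{i+1}$ holds trivially. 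Hence $k$-approval with $k \ge 2$ again lies in the scope of the two theorems, giving all four problems in \Pshort{}.

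The one case that escapes the hypothesis is plurality, i.e.\ $1$-approval: there $\alpha_1 - \alpha_2 = 1$ while $\alpha_2 - \alpha_3 = 0$, so the convexity condition fails and the theorems do not apply directly. This is the only genuine obstacle, and it is precisely why plurality is singled out before the statement. For this case I would invoke the separate max-flow formulation: any manipulator constrained to rank $x$ above $y$ cannot top-rank $y$ and therefore cannot contribute to $y$'s plurality score, so $y$'s final score is fixed at its value from $\mathcal{V}\setminus M$. The question then reduces to whether the $|M|$ plurality points of the coalition can be distributed among the candidates other than $y$ so that no such candidate overtakes $y$ (the precise threshold per candidate depending on the fixed tie-breaking order $\succ_t$). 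This is a capacitated distribution, hence a feasibility problem solvable by max flow, and the same network handles the search variants CPMS and CPMSW by additionally bounding the number of points used by the coalition size $k$.

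Combining the three cases --- Borda and $k$-approval with $k\ge 2$ via Theorem~\ref{CPMWScr} and Theorem~\ref{thm:CPMWOScr}, and plurality via the max-flow construction --- yields that the CPMW, CPM, CPMSW, and CPMS problems are in \Pshort{} for Borda and $k$-approval, as claimed. I expect the verification of the convexity condition to be entirely routine; the only step requiring a genuinely different argument is plurality, which is why the max-flow remark is stated as a prerequisite of the corollary.
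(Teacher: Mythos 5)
Your proposal is correct and follows essentially the same route as the paper: the paper's entire justification is the remark that plurality is handled by max flow and that the rest follows from Theorem~\ref{CPMWScr} and Theorem~\ref{thm:CPMWOScr}, which is exactly your case analysis (Borda and $k$-approval with $k\ge 2$ satisfy $\alpha_1-\alpha_2\le\alpha_i-\alpha_{i+1}$, while plurality fails it and needs the separate flow argument). You in fact supply more detail than the paper does, particularly in spelling out the plurality reduction to a capacitated distribution problem.
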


\subsection{Maximin Voting Rule}

For the maximin voting rule, we show that all the four problems are in \Pshort{}, when we have a coalition size of one to check for.

\begin{theorem}\label{PUMMaximinP}
 The CPMW, CPM, CPMSW, and CPMS problems for maximin voting rule are in \Pshort{} for coalition size $c=1$ or maximum possible coalition size $k=1$.
\end{theorem}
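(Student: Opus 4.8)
The plan is to invoke Proposition~\ref{cpmcpmw}: both regimes in the statement satisfy $k=O(1)$ (indeed $k=1$), so a polynomial-time algorithm for CPMW with a coalition of size one immediately yields CPM, CPMSW and CPMS for $c=1$ or $k=1$. (Note that the scoring-rule lemmas do not apply here, since maximin scores are a minimum of pairwise margins rather than a sum.) Thus I would fix the lone suspected manipulator $i$, let $x=r(\succ)$ be the current winner and $y\neq x$ the prescribed actual winner, and let $D_{-i}(a,b)$ denote the weighted-majority margin of $a$ over $b$ contributed by the $n-1$ reported votes other than $i$'s. Casting a vote $\succ_i^{\prime}$ changes every margin $D(a,b)$ to $D_{-i}(a,b)+1$ if $a\succ_i^{\prime} b$ and to $D_{-i}(a,b)-1$ otherwise, and the maximin score of a candidate $c$ is $\min_{d\neq c}D(c,d)$. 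The task is to decide whether some linear order $\succ_i^{\prime}$ with $x\succ_i^{\prime} y$ makes $y$ the maximin winner.

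First I would record two structural simplifications. Moving $y$ upward in $\succ_i^{\prime}$ only increases each $D(y,\cdot)$ and decreases the corresponding $D(\cdot,y)$, so by an exchange argument we may assume $y$ sits immediately below $x$; in particular, when $x$ is pushed to the top, $y$ is ranked above every candidate except $x$, which simultaneously maximizes $y$'s maximin score. Second, to break the circular dependence between $y$'s final score and the order chosen, I would guess the value $t$ that $y$'s maximin score takes in a hypothetical winning vote; since all margins lie in $[-(n-1),n-1]$ there are only $O(n)$ candidate values of $t$. For a fixed $t$, requiring $y$'s maximin to be at least $t$ determines, for each $d\neq x,y$, whether $d$ is \emph{forced} below $y$ (precisely those with $D_{-i}(y,d)\le t$) and rules the guess out if even placing $d$ below $y$ cannot secure $D(y,d)\ge t$.

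Given $t$, the residual problem is to order the remaining candidates so that every $c\neq y$ has maximin score at most $t$ (ties resolved through the fixed tie-breaking order $\succ_t$ in favour of $y$). A candidate $c$ is suppressed as soon as some opponent $d$ satisfies $D(c,d)\le t$, and I would test realizability of all these simultaneous ``domination'' requirements by a greedy that builds $\succ_i^{\prime}$ from the bottom up, placing a candidate in the current lowest free slot once one of its suppressing opponents is guaranteed to lie above it. A clean sub-observation is that whenever the achievable threshold satisfies $t\ge 0$, $y$ itself dominates every other candidate, since $D(c,y)=-D(y,c)\le -t\le t$; so in that regime the only genuine constraint is on $x$.

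The main obstacle, and the real content of the proof, is precisely the candidate $x$: the constraint $x\succ_i^{\prime} y$ pulls in opposite directions, since ranking $y$ high (to boost its score and let it dominate the field) forces $x$ even higher and thereby inflates $x$'s own maximin score, potentially leaving $x$ as the winner. The heart of the argument is to show that $x$ can be held below the threshold by placing above it a candidate $d$ with $D_{-i}(x,d)\le t+1$ and $D_{-i}(y,d)\ge t+1$ — suppressing $x$ without dropping $y$ below $t$ — and that such local fixes can always be assembled into a single consistent linear order. Proving that the greedy (equivalently, a matching-style feasibility check) succeeds exactly when a winning vote exists, and bounding the whole search over the $O(n)$ thresholds $t$ (and, if convenient, over the $m-1$ possible positions of the $x,y$ block) by a polynomial, is the step requiring care; the remainder is bookkeeping.
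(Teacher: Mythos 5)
Your overall plan coincides with the paper's: reduce everything to single-voter CPMW via Proposition~\ref{cpmcpmw}, normalize so that $y$ sits immediately after $x$ in the hypothetical truthful vote, guess the relevant score data, and complete the rest of the linear order greedily. (One cosmetic difference: the paper observes that one additional linear order shifts every pairwise margin by exactly one, so the final maximin scores of $x$ and $y$ can each only be $score_{\succ_{-i}}(\cdot)\pm 1$, leaving $O(1)$ score guesses rather than your $O(n)$ thresholds $t$; both are polynomial, so nothing is lost there.)

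The gap is that you stop at precisely the step that carries the content. You yourself write that the heart of the argument is to show that the local ``suppression'' requirements \emph{can always be assembled into a single consistent linear order}, and that proving your bottom-up greedy succeeds exactly when a winning vote exists ``is the step requiring care'' --- but you never carry out that step; the feasibility check is stated only as a plan. The paper discharges exactly this obligation by organizing the completion as an instance of the greedy manipulation algorithm of \citep{bartholdi1989computational}: after guessing the final scores of $x$ and $y$, the position of the adjacent block $x>y$, and which witnesses from the worst-opponent sets $B(x)$ and $B(y)$ must precede $x$ (a common witness, or two distinct ones in a guessed order), it fills the remaining positions top-down, always inserting a candidate whose placement does not make $y$ lose; correctness then follows from Theorem~1 of that paper, which applies to maximin because a candidate's maximin score is monotone in the set of candidates ranked below it. Your bottom-up variant would need its own exchange argument, which is not supplied. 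A smaller slip: your parenthetical that ties at threshold $t$ are ``resolved \ldots in favour of $y$'' is unwarranted --- the tie-breaking order $\succ_t$ is fixed in advance and need not favour $y$, so the algorithm must simply check whom $r$ declares the winner. As written, the proposal is a faithful outline of the paper's strategy with the decisive lemma left unproven.
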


\begin{proof}
 Given a $n$-voters' profile $\succ \in \mathcal{L(C)}^n$ and a voter $v_i$, 
 let the \textit{current winner} be $x:= r(\succ)$ and the given \textit{actual winner} be $y$. We will construct 
 $\succ^{\prime} = (\succ_i^{\prime} , \succ_{-i})$, if it exists, such that 
 $r(\succ) >_i^{\prime} r(\succ^{\prime})=y$, thus deciding whether $v_i$ is a 
 possible manipulator or not. Now, the maximin score of $x$ and $y$ in the profile $\succ^{\prime}$ can take one of 
 values from the set $\{ score_{\succ_{-i}} (x) \pm 1\}$ and $\{ score_{\succ_{-i}} (y) \pm 1\}$. 
 The algorithm is as follows. We first guess the maximin score of $x$ and $y$ in the profile $\succ^{\prime}$. 
 There are only four possible guesses. Suppose, we guessed that $x$'s score will decrease by one and $y$'s score 
 will decrease by one assuming that this guess makes $y$ win. 
 Now notice that, without loss of generality, 
 we can assume that $y$ immediately follows $x$ in the preference $\succ_i^{\prime}$ since $y$ is the winner 
 in the profile $\succ^{\prime}$. This implies that there are only $O(m)$ many possible positions for $x$ and $y$ 
 in $\succ_i^{\prime}$. We guess the position of $x$ and thus the position of $y$ in $\succ_i^{\prime}$.  Let $B(x)$ 
 and $B(y)$ be the sets of candidates with whom $x$ and respectively $y$ performs worst. Now since, $x$'s score 
 will decrease and $y$'s score will decrease, we have the following constraint on $\succ_i^{\prime}$. There must be a candidate each from $B(y)$ and $B(x)$ that will precede $x$. We do not know a-priori if there is one candidate that will serve as a witness for both $B(x)$ and $B(y)$, or if there separate witnesses. In the latter situation, we also do not know what order they appear in. Therefore we guess if there is a common candidate, and if not, we guess the relative ordering of the distinct candidates from $B(x)$ and $B(y)$.
  Now we place any candidate at the top position of $\succ_i^{\prime}$ if this action does not make $y$ lose the election. If there are many choices, we prioritize in favor of candidates from $B(x)$ and $B(y)$ --- in particular, we focus on the candidates common to $B(x)$ and $B(y)$ if we expect to have a common witness, otherwise, we favor a candidate from one of the sets according to the guess we start with. If still there are multiple choices, we 
 pick arbitrarily.  After that we move on to the next position, and do the same thing (except we stop prioritizing explicitly for $B(x)$ and $B(y)$ once we have at least one witness from each set). The other situations can be handled similarly with minor modifications. In this way, if it is able to get a 
 complete preference, then it checks whether $v_i$ is a possible manipulator or not 
 using this preference. If \textit{yes}, then it returns YES. Otherwise, it tries 
 other positions for $x$ and $y$ and other possible scores of $x$ and $y$. After 
 trying all possible guesses, if it cannot find the desired preference, 
 then it outputs NO. Since there are only polynomial 
 many possible guesses, this algorithm runs in a polynomial amount of time. 
 The proof of correctness follows from the proof of Theorem 1 in~\citep{bartholdi1989computational}.
\qed\end{proof}

We now show that the CPMW problem for maximin voting rule is in \NPCshort{} when 
we have $c>1$. Towards that, we use the fact that the \textit{unweighted 
coalitional manipulation (UCM)} problem for maximin voting rule is in \NPCshort{} 
\citep{xia2009complexity}, when we have $c>1$. The UCM problem is as follows. 

\begin{definition}{(UCM Problem)}\\
 Given a voting rule $r$, a set of manipulators $M\subset \mathcal{V}$, a profile of 
 non-manipulators' vote $(\succ_i)_{i\in \mathcal{V}\setminus M}$, 
 and a candidate $z\in \mathcal{C}$, we are asked whether there exists a 
 profile of manipulators' votes $(\succ_j^{\prime})_{j\in M}$ such that 
 $r((\succ_i)_{i\in \mathcal{V}\setminus M}, (\succ_j^{\prime})_{j\in M}) = z$. 
 Assume that ties are broken in favor of $z$.
\end{definition}

We define a restricted version of the UCM problem called R-UCM as follows. 

\begin{definition}{(R-UCM Problem)}\\
 This problem is the same as the UCM problem with a given guarantee - let $c:= |M|$. 
 The candidate $z$ loses pairwise election with every other candidate by $4c$ 
 votes. For any two candidates $a,b \in \mathcal{C}$, either $a$ and $b$ ties or 
 one wins pairwise election against the other one by margin of either $2c+2$ or 
 of $4c$ or of $8c$. We denote the margin by which a candidate $a$ defeats $b$, by $d(a,b)$.
\end{definition}

The R-UCM problem for maximin voting rule is in \NPCshort{} \citep{xia2009complexity}, 
when we have $c>1$. We will need the following lemma to manipulate the pairwise difference scores in the reduction. 
The lemma has been used before \citep{mcgarvey1953theorem,xia2008determining}.

\begin{lemma}\label{maxminexist}
 For any function $f:\mathcal{C} \times \mathcal{C} \longrightarrow \mathbb{Z}$, such that
 \begin{enumerate}
  \item $\forall a,b \in \mathcal{C}, f(a,b) = -f(b,a)$.
  \item $\forall a,b \in \mathcal{C}, f(a,b)$ is even,
 \end{enumerate}
 there exists a $n$ voters' profile such that for all $a,b \in \mathcal{C}$, $a$ defeats 
 $b$ with a margin of $f(a,b)$. Moreover, 
 $$n = O\left(\sum_{\{a,b\}\in \mathcal{C}\times\mathcal{C}} |f(a,b)|\right)$$
\end{lemma}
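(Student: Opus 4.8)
The plan is to realize the prescribed margins one unordered pair at a time, using a small gadget of two votes that adjusts a single pairwise margin by exactly $2$ while leaving every other pairwise comparison untouched; this is the classical construction underlying McGarvey's theorem. Since $f$ is antisymmetric, for each unordered pair $\{a,b\}$ exactly one of $f(a,b),f(b,a)$ is nonnegative (or both vanish); I would fix the orientation so that $f(a,b)\ge 0$ and write $f(a,b)=2k_{a,b}$ with $k_{a,b}$ a nonnegative integer, using the hypothesis that $f$ is even.

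The building block is as follows. Given an ordered pair $(a,b)$, let $c_1,c_2,\dots,c_{m-2}$ be the remaining candidates in an arbitrary but fixed order, and introduce the two votes
$$ a \succ b \succ c_1 \succ c_2 \succ \cdots \succ c_{m-2}, \qquad c_{m-2} \succ \cdots \succ c_2 \succ c_1 \succ a \succ b. $$
First I would verify that this pair of votes increases the margin of $a$ over $b$ by $2$ and changes no other margin. For the pair $\{a,b\}$, both votes rank $a$ above $b$, contributing $+2$. For any pair $\{c_i,c_j\}$ among the remaining candidates, the two votes rank them in opposite relative orders, so their contributions cancel. For a pair $\{a,c_i\}$, the first vote has $a$ ahead of $c_i$ (as $a$ is on top) while the second has $c_i$ ahead of $a$ (as $a$ sits below all the $c$'s), so these cancel; the identical argument applies to each pair $\{b,c_i\}$, since $b$ is second in the first vote and last in the second. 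Hence the net effect of the gadget is exactly $+2$ on the single margin of $a$ over $b$.

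With the gadget in hand, I would assemble the profile by superposition: for every unordered pair $\{a,b\}$ with $f(a,b)\ge 0$, include $k_{a,b}$ independent copies of the gadget for $(a,b)$. Because pairwise margins are additive across disjoint sets of votes, the margin of $a$ over $b$ in the union is $2k_{a,b}=f(a,b)$ for every pair, as required. For the count, each gadget uses $2$ votes and supplies a margin of $2$, so the pair $\{a,b\}$ consumes $2k_{a,b}=|f(a,b)|$ votes; summing over all unordered pairs gives $n=\sum_{\{a,b\}}|f(a,b)|$, which matches the claimed $O(\cdot)$ bound up to a constant factor.

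The only real work is the case analysis verifying that a single gadget perturbs exactly one margin; everything else follows by additivity and bookkeeping. I would expect that step to be the main, albeit routine, obstacle, since one must check that reversing the $c_i$'s in the second vote simultaneously cancels the $c$-internal comparisons and the comparisons of each $c_i$ with both $a$ and $b$, while the order of $a$ and $b$ is deliberately preserved across the two votes.
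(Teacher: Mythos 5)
Your proof is correct: the two-vote gadget does shift the $(a,b)$ margin by exactly $+2$ while cancelling on every other pair, and superposition plus the additivity of margins gives the stated bound. The paper itself supplies no proof of this lemma---it only cites McGarvey's theorem and prior work---and your construction is precisely the standard McGarvey argument those citations refer to, so you have reconstructed the intended proof.
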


\begin{theorem}\label{maxmincpmwnpc}
 The CPMW problem for maximin voting rule is in \NPCshort{}, for $c>1$. 
\end{theorem}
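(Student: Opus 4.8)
Membership in \NPshort{} is immediate: a coalition profile $(\succ_j^{\prime})_{j\in M}$ is a polynomial-size certificate, and given it we can check in polynomial time that $x \succ_j^{\prime} y$ for every $j \in M$ and that $r((\succ_i)_{i\in\mathcal{V}\setminus M},(\succ_j^{\prime})_{j\in M}) = y$, since the maximin winner is computable in polynomial time. Hence the work is in the hardness.

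The plan is to reduce from the R-UCM problem for maximin, which is \NPCshort{} for $c>1$ \citep{xia2009complexity}. Given an R-UCM instance with candidate set $\mathcal{C}$ (containing the distinguished candidate $z$), manipulators $M$ with $|M|=c>1$, non-manipulator votes, and the promised pairwise margins, I build a CPMW instance as follows. I keep the coalition $M$ and set the actual winner to $y := z$. I introduce one fresh candidate $w$, which will play the role of the current winner $x$. Using Lemma~\ref{maxminexist}, I construct non-manipulator votes over $\mathcal{C}\cup\{w\}$ whose pairwise-difference matrix (i) agrees with the R-UCM non-manipulator margins on every pair inside $\mathcal{C}$, and (ii) places $w$ so that $z$ beats $w$ with enough slack that the forced orientation $w \succ^{\prime} z$ never becomes $z$'s binding pairwise comparison, while simultaneously giving $w$ one built-in heavy defeat to a single candidate of $\mathcal{C}$ so that $w$ can be dethroned once the coalition re-votes. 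All the required margins are even, so Lemma~\ref{maxminexist} applies and the number of padding voters is $O(c\,m^2)$, keeping the reduction polynomial. Finally I fix the reported coalition votes, all ranking $w$ at the top, so that $w$ is the maximin winner of the reported profile; then the current winner is $x=w\neq z=y$, as required.

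For correctness I argue both directions. In the forward direction, if the R-UCM instance is a yes-instance I take a coalition profile making $z$ win among $\mathcal{C}$; as usual I may assume $z$ is ranked first in each such vote, so $z$'s maximin equals its promised winning value and every other candidate of $\mathcal{C}$ is at or below it. I then lift $w$ to the top of each vote, displacing $z$ to second. This keeps every pairwise score of $z$ against $\mathcal{C}$ unchanged, leaves the $z$-versus-$w$ comparison above $z$'s bottleneck by construction, and caps $w$'s maximin at $z$'s value through its built-in heavy defeat; with ties broken toward the target, $z$ still wins and $w\succ^{\prime} z$ holds, so $M$ is a coalition of possible manipulators against $z$. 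In the backward direction, from any CPMW solution I delete $w$ from every coalition vote. The margins for $w$ are chosen precisely so that $w$ is never the maximin bottleneck of any candidate of $\mathcal{C}$: in a $z$-winning profile each candidate's binding comparison already lies strictly below its pairwise score against $w$, so removing $w$ leaves all maximin scores on $\mathcal{C}$ unchanged. Hence $z$ still wins among $\mathcal{C}$, yielding an R-UCM solution.

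\emph{Main obstacle.} The delicate point is choosing $w$'s pairwise margins, together with the reported votes, so that three competing demands hold simultaneously: $w$ must win the reported profile (so that $x=w$), $w$ must be beatable in the manipulated profile (so that $z$ can overtake it), and $w$ must stay off every original candidate's maximin bottleneck (so that the backward projection is valid). These pull against one another because $w$'s margins against $\mathcal{C}$ are shared between the reported and manipulated profiles while the coalition can swing each such margin by only $\pm c$; moreover the forced orientation $w\succ^{\prime} z$ \emph{helps} $w$ against $z$, so $z$ cannot overtake $w$ through the $w$--$z$ comparison and must do so through the R-UCM structure itself. Threading this needle---in particular guaranteeing that $w$, rather than some dominant original candidate, wins the reported profile---is where the promised margin gaps of R-UCM (the values $2c+2$, $4c$, $8c$ and the uniform $4c$ deficit of $z$) must be exploited, and is the technical heart of the reduction.
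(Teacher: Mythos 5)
Your reduction source (R-UCM), the NP-membership argument, and the overall shape (add a fresh current winner $w$, declare $y:=z$) all match the paper's proof, and you have correctly located the central tension. But the one design decision you commit to --- a single new candidate $w$ whose dethroning mechanism is ``one built-in heavy defeat to a single candidate of $\mathcal{C}$'' --- is provably unworkable, and the paragraph you label the ``technical heart'' is exactly the part that is missing. Here is the obstruction. Call the dethroner $a^*\in\mathcal{C}$ and let it beat $w$ by margin $H$ among the non-manipulators. For $w$ to win the reported profile you are forced into $H=2c+2$ (anything larger and some candidate of $\mathcal{C}$ with worst loss $2c+2$ outscores $w$; anything at most $2c$ and $w$ can never fall below $z$'s ceiling of $-3c$). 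Now in any manipulated profile, let $k$ be the number of coalition votes ranking $w$ above $a^*$ and $j$ the number ranking $z$ above $a^*$; since every coalition vote must have $w\succ' z$, necessarily $j\le k$. Then $D(w,a^*)=-3c-2+2k$ is $w$'s maximin score (its other comparisons are ties, hence at least $-c$), while $z$'s maximin score is at most $\min(-3c,\,-5c+2j)$ because $a^*$, like every member of $\mathcal{C}\setminus\{z\}$, beats $z$ by $4c$. A short case analysis ($j=c$ forces $k=c$; $j<c$ uses $k\ge j$) gives $D(w,a^*)-\text{score}(z)\ge 2c-2>0$ for $c>1$ in every case. So $w$ strictly outscores $z$ no matter what the coalition does: your constructed CPMW instance is always a no-instance, and the forward direction of the reduction fails. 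The root cause is the point you yourself observe --- the forced orientation $w\succ' z$ helps $w$ --- combined with the fact that any dethroner drawn from $\mathcal{C}$ already beats $z$ by $4c$, so pushing it above $w$ (hence above $z$) costs $z$ strictly more than it costs $w$.

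The paper escapes this by making the dethroner a \emph{fresh} candidate $d_1$ that ties with $z$ (and with all of $\mathcal{C}$) and beats $w$ by exactly $2c+2$: ranking $d_1$ above $z$ in every coalition vote then costs $z$ only $D(z,d_1)=-c$, which never binds, while driving $w$ down to $-3c-2$. Of course $d_1$ itself would then threaten to win, so the paper adds two more auxiliaries $d_2,d_3$ in an $8c$-cycle with $d_1$ to kill all three. Your backward direction (project the coalition votes onto $\mathcal{C}$ and argue $w$ is never a bottleneck) survives essentially unchanged in that construction, but with only one auxiliary candidate and a dethroner inside $\mathcal{C}$ there is no way to thread the needle you describe. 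To repair the proof you need to abandon the single-new-candidate economy and introduce the tying dethroner plus the cycle that neutralizes it.
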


\begin{proof}
Clearly the CPMW problem for maximin voting rule is in \NPshort{}. We provide a many-one reduction from the 
 R-UCM problem for the maximin voting rule to it. Given a R-UCM problem 
 instance, we define a CPMW problem instance 
 $\Gamma = (\mathcal{C}^{\prime},(\succ_i^{\prime})_{i\in\mathcal{V^{\prime}}},M^{\prime})$ as follows.
 \[ \mathcal{C}^{\prime}:= \mathcal{C} \cup \{w,d_1,d_2,d_3\}\]
 We define $\mathcal{V^{\prime}}$ such that $d(a,b)$ is the same as the R-UCM instance, 
 for all $a,b\in \mathcal{C}$ and $d(d_1,w)=2c+2, d(d_1,d_2)=8c, d(d_2,d_3)=8c, d(d_3,d_1)=8c$. 
 The existence of such a $\mathcal{V^{\prime}}$ is guaranteed from Lemma\nobreakspace \ref {maxminexist}. Moreover, 
 Lemma\nobreakspace \ref {maxminexist} also ensures that $|\mathcal{V^{\prime}}|$ is $O(mc)$. 
 The votes of the voters in $M$ is $w\succ \dots$. Thus the \textit{current winner} is $w$. 
 The \textit{actual winner} is defined to be $z$. The tie breaking rule is $\succ_t = w\succ z\succ \dots$, 
 where $z$ is the candidate whom the manipulators in $M$ want to make winner in the 
 R-UCM problem instance. Clearly this reduction takes polynomial amount of time. 
 Now we show that, $M$ is a coalition of possible 
 manipulators \textit{iff} $z$ can be made a winner.
 
 The \textit{if} part is as follows. Let $\succ_i, i\in M$ be the votes that make $z$ win. 
 We can assume that $z$ is the most preferred candidate in all the preferences $\succ_i, i\in M$.
 Now consider the preferences for the voters in $M$ is follows.
 \[\succ_i^{\prime}:= d_1\succ d_2\succ d_3\succ w\succ_i, i\in M\]
 The score of every candidate in $\mathcal{C}$ is not more than $z$. The score 
 of $z$ is $-3c$. The score of $w$ is $-3c-2$ and the scores of $d_1, d_2,$ and $d_3$ 
 are less than $-3c$. Hence, $M$ is a coalition of possible manipulators with the actual preferences 
 $\succ_i^{\prime}:= d_1\succ d_2\succ d_3\succ w\succ_i, i\in M$.
 
 The \textit{only if} part is as follows. Suppose $M$ is a coalition of possible manipulators 
 with actual preferences $\succ_i^{\prime}, i\in M$. Consider the preferences $\succ_i^{\prime}, i\in M$, 
 but restricted to the set $\mathcal{C}$ only. Call them $\succ_i, i\in M$. We claim that 
 $\succ_i, i\in M$ with the votes from $\mathcal{V}$ makes $z$ win the election. 
 If not then, there exists a candidate, say $a\in \mathcal{C}$, whose score is strictly more than 
 the score of $z$ - this is so because the tie breaking rule is in favor of $z$. 
 But this contradicts the fact that $z$ wins the election when the voters 
 in $M$ vote $\succ_i^{\prime}, i\in M$ along with the votes from $\mathcal{V^{\prime}}$.
\qed\end{proof}

\subsection{Bucklin Voting Rule}

In this subsection, we design polynomial time algorithms for both the CPMW and the CPM 
problem for the Bucklin voting rule. Again, we begin by showing that if there are profiles witnessing manipulation, then there exist profiles that do so with some additional structure, which will subsequently be exploited by our algorithm.

\begin{lemma}\label{lemmaBucklin}
Consider a preference profile $(\succ_i)_{i \in \cal{V}}$, where $x$ is the winner with respect to the Bucklin voting rule. Suppose a subset of voters $M \subset \mathcal{V}$ form a coalition of possible manipulators. Let $y$ be the actual winner. Then there exist preferences $(\succ_i^\prime)_{i \in M}$ such that $y$ is a Bucklin winner in $((\succ_i)_{i \in \cal{V} \setminus M},(\succ_i^\prime)_{i \in M})$, and further:
 \begin{enumerate}
  \item $y$ immediately follows $x$ in each $\succ_i^{\prime}$.
  \item The rank of $x$ in each $\succ_i^{\prime}$ is in one of the 
  following - first, $b(y)-1$, $b(y)$, $b(y)+1$, where $b(y)$ be the Bucklin score of $y$ in $((\succ_i)_{i \in \cal{V} \setminus M},(\succ_i^\prime)_{i \in M})$.
 \end{enumerate}
\end{lemma}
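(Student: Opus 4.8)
The plan is to start from an arbitrary witnessing profile $(\succ_i^\prime)_{i \in M}$ --- one for which $y$ is the Bucklin winner of $((\succ_i)_{i \in \mathcal{V}\setminus M},(\succ_i^\prime)_{i \in M})$ with $x \succ_i^\prime y$ for every $i \in M$ --- and to massage it into the claimed form one property at a time, each time invoking the monotonicity of Bucklin winner determination: raising the current winner and/or lowering non-winners within a single vote cannot cost the winner the election (with a fixed lexicographic tie-break $\succ_t$, the winner's Bucklin score can only drop while the others' can only rise, so the argmin set shrinks to a subset still containing, and still headed by, $y$). To obtain property~1 I would, in each $\succ_i^\prime$, slide $y$ upward until it sits immediately below $x$; this is legal since $x \succ_i^\prime y$ already, it only improves $y$'s top-$k$ tallies and only worsens the tallies of the candidates strictly between $x$ and $y$, and it leaves $x$ and everyone else untouched. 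Hence $y$ remains a winner and $x \succ_i^\prime y$ is preserved, so I may assume property~1 and treat each manipulator's vote as determined by the single position $q_i$ of $x$, with $y$ at $q_i+1$.

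Write $\ell := b(y)$ for the Bucklin score of $y$ in the normalized profile. Because $y$ wins, every candidate's top-$k$ tally stays strictly below the half-way mark for each $k < \ell$, while $y$'s top-$\ell$ tally reaches it; these are the only inequalities that decide the outcome. The decisive effect of a manipulator's vote is therefore captured by how the pair $(x,y)$ straddles the thresholds $\ell-1$ and $\ell$: whether $y$ lies in the top $\ell$ (i.e. $q_i \le \ell-1$) and whether $x$ lies in the top $\ell-1$ and/or in the top $\ell$ (i.e. $q_i \le \ell-1$, resp. $q_i \le \ell$). As $q_i$ ranges over all positions these membership patterns are constant on the blocks $q_i \le \ell-2$, $q_i = \ell-1$, $q_i = \ell$, and $q_i \ge \ell+1$, and the four positions $\{1,\ell-1,\ell,\ell+1\} = \{1, b(y)-1, b(y), b(y)+1\}$ named in the lemma are exactly one representative of each block, with position $1$ as the extreme representative of the top block. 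The goal is then to slide each vote's rigid $xy$-pair to the representative of its block without disturbing $y$'s victory, using the same top-$(\ell-1)$ and top-$\ell$ score bookkeeping as in Lemmas~\ref{losersorted} and~\ref{algotheorem}.

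The main obstacle is that no direction of sliding is universally safe, which is precisely why all four positions --- including \emph{first} --- are needed. Sliding the pair toward the top preserves the membership of $x$ and $y$ in the two decisive top-sets and pushes every other candidate down (helpful), but it simultaneously raises $x$, the very competitor that must be kept below half, at the low levels $k < \ell-1$; sliding the pair down suppresses $x$ but lifts the candidates it passes, any of which could then reach an early majority. Thus a naive per-vote monotone replacement can fail, and the heart of the proof is a careful case analysis that, for each original position $q_i$, selects a target in $\{1, b(y)-1, b(y), b(y)+1\}$ in a safe direction and verifies, by comparing the top-$(\ell-1)$ and top-$\ell$ tallies of $x$, $y$, and the displaced candidates, that $y$ remains a Bucklin winner. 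I expect this verification --- in particular the argument that the chosen slides can be performed simultaneously across all manipulators without ever pushing a competitor to half-majority at a level below $\ell$ --- to be the crux; the degenerate cases $\ell \le 2$, where the four named positions collapse, and ties at level $\ell$ resolved through $\succ_t$, would be dispatched separately.
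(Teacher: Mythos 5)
Your argument for property~1 is sound and is essentially the paper's own first step: with $x\succ_i^{\prime} y$ already in hand, pulling $y$ up to the slot just below $x$ only improves $y$'s top-$k$ tallies, only worsens those of the candidates it passes, and leaves everyone else fixed, so $y$ survives under a fixed tie-break. For property~2, however, you have only set the problem up. You correctly identify $\{1,\,b(y)-1,\,b(y),\,b(y)+1\}$ as representatives of the four decisive blocks for the position $q_i$ of $x$, but you then explicitly defer the verification that the slides preserve $y$'s victory (``I expect this verification \ldots to be the crux''). That verification \emph{is} the lemma; without it the proof is incomplete.

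The missing ingredients are two observations that also dissolve the ``obstacle'' you describe. First, the two non-trivial moves both push the pair \emph{upward}: if $q_i\le b(y)-2$ send $x,y$ to positions $1,2$; if $q_i\ge b(y)+2$ send them to positions $b(y)+1,\,b(y)+2$. The pair is never slid down, so no third candidate is ever lifted --- every other candidate only falls in each modified vote --- and the per-vote changes compose trivially across the coalition; your concern about simultaneously pushing a competitor to a majority below level $b(y)$ never arises. Second, the only remaining danger, that $x$ improves past $y$, is excluded by the hypothesis that $x$'s Bucklin score is at least $b(y)$: in the first move $x$'s top-$k$ tally is unchanged for all $k\ge q_i$, and a new majority at some $k<q_i$ would force a majority at the unchanged threshold $q_i\le b(y)-2$, a contradiction; in the second move $x$'s tally is unchanged at every $k\le b(y)$ outright. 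The same bookkeeping shows $y$'s tallies at all thresholds up to $b(y)$ are unchanged, so $b(y)$ itself is invariant (which is needed for the stated positions to be self-consistent), the set of minimum-score candidates can only shrink, and the lexicographic tie-break still selects $y$.
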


\begin{proof}
 From Definition\nobreakspace \ref {cpmdef}, $y$'s rank must be worse than $x$'s rank in each $\succ_i^{\prime}$. 
 We now exchange the position of $y$ with the candidate which immediately follows $x$ in 
 $\succ_i^{\prime}$. This process does not decrease Bucklin score of any candidate except 
 possibly $y$'s, and $x$'s score does not increase. Hence $y$ will continue to win and thus  
 $\succ_i^{\prime}$ satisfies the first condition.
 
 Now to begin with, we assume that $\succ_i^{\prime}$ satisfies the first condition. 
 If the position of $x$ in $\succ_i^{\prime}$ is $b(y)-1$ or $b(y)$, we do not change it. 
 If $x$ is above $b(y)-1$ in $\succ_i^{\prime}$, then move $x$ and $y$ at the first and 
 second positions respectively. Similarly if $x$ is below $b(y)+1$ in $\succ_i^{\prime}$, 
 then move $x$ and $y$ at the $b(y)+1$ and $b(y)+2$ positions respectively. This process 
 does not decrease score of any candidate except $y$ because the Bucklin score of $x$ is at least 
 $b(y)$. The transformation cannot increase the score $y$ since its position has only been 
 improved. Hence $y$ continues to win and thus $\succ_i^{\prime}$ satisfies the second condition. 
\qed\end{proof}

Lemma\nobreakspace \ref {lemmaBucklin} leads us to the following theorem.

\begin{theorem}\label{UPCMBucklin}
 The CPMW problem and the CPM problems for Bucklin voting rule are in \Pshort{}. Therefore, by Proposition\nobreakspace \ref {cpmcpmw}, 
 the CPMSW and the CPMS problems are in \Pshort{} when the maximum coalition size $k=O(1)$.
\end{theorem}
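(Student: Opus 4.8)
The plan is to give a polynomial-time algorithm for CPMW directly; the CPM case then follows immediately, since by Definition~\ref{cpmdef} a set $M$ is a coalition of possible manipulators exactly when it is a coalition of possible manipulators against \emph{some} candidate $y\neq x$, so we simply run the CPMW algorithm once for each of the $m-1$ choices of $y$ and answer YES iff some run succeeds. So fix the current winner $x$, a target actual winner $y$, and the coalition $M$ with $|M|=c$. Write $T=\lceil n/2\rceil$ for the Bucklin threshold, and for a candidate $z$ and level $j$ let $N_z^{(j)}$ be the number of non-manipulators (voters in $\mathcal{V}\setminus M$) ranking $z$ in their top $j$ positions; these are fixed and precomputable. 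By Lemma~\ref{lemmaBucklin} it suffices to search over witnessing profiles in which, in every manipulator's vote, $y$ immediately follows $x$ and $x$ occupies one of the three ranks $\ell-1,\ell,\ell+1$, where $\ell$ is the eventual Bucklin score $b(y)$ of $y$. I would therefore guess $\ell\in\{1,\dots,m\}$ and, for each guess, decide feasibility.

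For a fixed $\ell$, each manipulator's vote falls into one of three modes by the rank of $x$: mode $A$ ($x$ at $\ell-1$, $y$ at $\ell$), mode $B$ ($x$ at $\ell$, $y$ at $\ell+1$), and mode $C$ ($x$ at $\ell+1$, $y$ at $\ell+2$). Only mode $A$ places $y$ inside the top $\ell$, so if $c_A$ manipulators use mode $A$, the requirement $b(y)\le\ell$ becomes $N_y^{(\ell)}+c_A\ge T$, while the requirement that $b(y)$ is not smaller than $\ell$ is the fixed check $N_y^{(\ell-1)}<T$. Conversely, only mode $A$ puts $x$ inside the top $\ell-1$, so keeping $b(x)\ge\ell$ (so that $x$ does not win at a lower level) requires $N_x^{(\ell-1)}+c_A<T$. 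These two inequalities pin $c_A$ to an interval; if it is empty, the guess fails. The tie-break at level $\ell$ is handled by an analogous count: for a candidate whose tie-break order beats $y$ we must keep its top-$\ell$ count strictly below $T$, whereas for a candidate that $y$ beats it suffices to keep its top-$(\ell-1)$ count below $T$; for $x$ in particular this may further constrain how many of the non-mode-$A$ manipulators use mode $B$ (which keeps $x$ in the top $\ell$) versus mode $C$ (which does not).

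The heart of the argument is choosing, for each manipulator, which other candidates fill the positions above $x$, since precisely these candidates are pushed into some manipulator's top $\ell-1$ and thereby threaten to acquire Bucklin score at most $\ell-1$ themselves; exactly $\ell-2$ of them in mode $A$ and $\ell-1$ in modes $B$ and $C$ land in the top $\ell-1$ positions, and all candidates placed below $y$ sit at positions exceeding $\ell$ and so do no harm. Each candidate $z\notin\{x,y\}$ can be lifted into the top $\ell-1$ by at most $\mathrm{slack}(z):=T-1-N_z^{(\ell-1)}$ manipulators before its own Bucklin score would drop below $\ell$ (a slightly larger, analogous slack governs the level-$\ell$ tie-break constraints). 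Deciding whether the forced lift demands can be met within these per-candidate slacks is exactly a bipartite feasibility problem, which I would model as a max-flow instance: a source feeding each manipulator an amount equal to its number of top-$(\ell-1)$ above-$x$ slots, unit-capacity edges from each manipulator to each competitor (encoding that a candidate appears at most once per vote), and capacity $\mathrm{slack}(z)$ on the edge from each competitor $z$ to the sink. The guess is feasible iff this network admits a flow saturating the source, i.e.\ iff all lift demands are simultaneously placeable; the competitors that are not lifted are simply dropped below $y$.

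Correctness in one direction is immediate, as a saturating flow yields explicit manipulator votes under which $y$ is the Bucklin winner; in the other direction, Lemma~\ref{lemmaBucklin} guarantees that any genuine witnessing profile can be transformed into one of the structured forms the search enumerates, so some combination of $\ell$, $c_A$, and flow is found. Since there are $O(m)$ guesses for $\ell$, $O(c)$ choices for $c_A$ together with the $B/C$ split, and each flow computation is polynomial, the whole procedure runs in polynomial time, and the CPMSW and CPMS conclusions then follow by Proposition~\ref{cpmcpmw} for $k=O(1)$. I expect the main difficulty to be exactly the bookkeeping of the third paragraph: justifying that only the above-$x$ placements matter, that their aggregate feasibility decouples cleanly into the per-candidate slack constraints, and that the flow network captures this decoupled problem \emph{exactly} (introducing no spurious feasibility), together with the fiddly but routine level-$\ell$ tie-breaking cases that separate modes $B$ and $C$ and decide whether $x$ itself is kept just out of contention.
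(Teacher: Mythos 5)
Your overall strategy matches the paper's: both reduce CPM to CPMW by trying each of the $m-1$ candidates $y$, both rest entirely on Lemma~\ref{lemmaBucklin} to restrict attention to witnessing profiles in which $y$ immediately follows $x$ and $x$ sits at one of the ranks $1$, $b(y)-1$, $b(y)$, $b(y)+1$, and both then enumerate the polynomially many placements of $x$ and $y$ (by anonymity only the counts matter) before filling in the remaining top positions. Where you diverge is the filling subroutine. The paper fills the top-$b(y)$ slots \emph{greedily} --- put any candidate into an empty slot as long as doing so does not already make $y$ lose --- and justifies this with a short exchange argument: permuting the non-$\{x,y\}$ candidates ranked above level $b(y)$ cannot change the outcome, since every competitor's Bucklin score is at least $b(y)$ and only its top-$(b(y)-1)$ and top-$b(y)$ counts matter. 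You instead reduce the filling step to a max-flow feasibility question with per-candidate capacities $\mathrm{slack}(z)=T-1-N_z^{(\ell-1)}$. Your formulation is sound in spirit and is more explicit about \emph{why} feasibility decouples into per-candidate budgets, but as literally described a single capacity per competitor does not simultaneously encode the two distinct constraints you yourself identify: the top-$(\ell-1)$ budget (which prevents a competitor from attaining Bucklin score below $\ell$) and the top-$\ell$ budget (which governs the tie-break at level $\ell$, and which is the only constraint touched by the position-$\ell$ slot occurring in your mode $C$). A two-layer network --- separate arcs for top-$(\ell-1)$ placements and for position-$\ell$ placements, with the former's capacity node feeding into the latter's --- repairs this; since you flag the issue explicitly, it is an incompleteness in the write-up rather than an error in the approach. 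The paper's greedy sidesteps this bookkeeping entirely at the price of a terser correctness argument, while your flow formulation, once the layered network is spelled out, gives a cleaner certificate of exactness.
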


\begin{proof}
 Proposition\nobreakspace \ref {cpmcpmw} says that it is enough to prove that the CPMW problem is in \Pshort{}. 
 Let $x$ be the \textit{current winner} and $y$ be the given \textit{actual winner}. 
 For any final Bucklin score $b(y)$ of $y$, there are polynomially many 
 possibilities for the positions of $x$ and $y$ in the profile of $\succ_i, i\in M$, 
 since Bucklin voting rule is anonymous. Once the positions of $x$ and $y$ is fixed, we try 
 to fill the top $b(y)$ positions of each $\succ_i^{\prime}$ - place a candidate in an empty 
 position above $b(y)$ in any $\succ_i^{\prime}$ if doing so does not make $y$ lose the election. 
 If we are able to successfully fill the top $b(y)$ positions of all $\succ_i^{\prime}$ 
 for all $i \in M$, then $M$ is a coalition of possible manipulators. 
 If the above process fails for all possible above mentioned 
 positions of $x$ and $y$ and all possible guesses of $b(y)$, then 
 $M$ is not a coalition of possible manipulators. Clearly the above algorithm runs in \textit{poly(m,n)} 
 time.
 
 The proof of correctness is as follows. If the algorithm outputs that $M$ is 
 a coalition of possible manipulators,  
 then it actually has constructed $\succ_i^{\prime}$ for all $i \in M$ with respect 
 to which they form a coalition of possible manipulators. 
 On the other hand, if they form a coalition of possible manipulators, then Lemma\nobreakspace \ref {lemmaBucklin}
 ensures that our algorithm explores all the sufficient positions of $x$ and $y$ in 
 $\succ_i^{\prime}$ for all $i \in M$. Now if $M$ is a possible coalition of manipulators, then 
 the corresponding positions for $x$ and $y$ have also been searched. Our greedy
 algorithm must find it since permuting the candidates except $x$ and $z$ which are 
 ranked above $b(y)$ in $\succ_i^{\prime}$ cannot stop $y$ to win the 
 election since the Bucklin score of other candidates except $y$ is at least $b(y)$.
\qed\end{proof}

\subsection{STV Voting Rule}

Next, we prove that the CPMW and the CPM problems for STV rule is in \NPCshort{}. To this end, we reduce from the Exact Cover by 3-Sets Problem (X3C), which is known to be in \NPCshort{} \citep{garey1979computers}. 
The X3C problem is as follows.

\begin{definition}(X3C Problem)\\
 Given a set $S$ of cardinality $n$ and $m$ subsets $S_1,S_2, \dots, S_m \subset S$ with $|S_i|=3, 
 \forall i=1, \dots, m,$ does there exist an index set $I\subseteq \{1,\dots,m\}$ 
 with $|I|=\frac{|S|}{3}$ such that $\cup_{i\in I} S_i = S$.
\end{definition}

\begin{theorem}\label{UPCMSTVNPC}
 The CPM problem for STV rule is in \NPCshort{}.
\end{theorem}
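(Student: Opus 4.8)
The containment in \NPshort{} is immediate: a certificate consists of a candidate $y$ together with true preferences $(\succ_j')_{j \in M}$ of the suspected coalition. Since the STV winner is computable in polynomial time, I can verify in polynomial time that $y \neq x$ (where $x$ is the current winner), that $x \succ_j' y$ for every $j \in M$, and that $y$ wins on the profile obtained by replacing the reported ballots of $M$ with $(\succ_j')_{j \in M}$; this is precisely the condition in Definition \ref{cpmdef}.

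For hardness I would reduce from X3C, building on the classical fact that a single voter can steer an STV election to a designated candidate if and only if an associated covering instance is solvable \citep{bartholdi1991single}; it will suffice to take $|M| = 1$. Given an X3C instance $(S, \{S_1, \dots, S_m\})$ with $|S| = n$, I construct an STV election whose candidates include one per element of $S$, one per set $S_i$, a distinguished \emph{current-winner} $x$, a \emph{target} $y$, and a few auxiliary candidates used only to pin down the elimination order. The non-manipulator ballots are written out explicitly --- rather than merely prescribing pairwise margins, since STV depends on the entire ballots and not just on the majority graph --- so that, together with the \emph{reported} ballot of the manipulator (which ranks $x$ first), the elimination sequence is forced and leaves $x$ as the last survivor. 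Hence $x$ is the current winner.

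The core of the construction tunes the round-by-round plurality tallies so that the manipulator's only real freedom is to decide which element-candidates inherit the support transferred from its ballot, and this choice fixes the elimination order among the set-candidates exactly as in the Bartholdi--Orlin encoding. The counts are arranged so that $y$ survives to the final round precisely when the manipulator's true ballot corresponds to an exact cover of $S$. In that ballot I place $x$ immediately above $y$, both ranked low enough that $x$ is eliminated well before the last round and its transfer then flows toward $y$; this simultaneously honors the constraint $x \succ' y$ and realizes the manipulation, so $M$ is a coalition of possible manipulators whenever X3C is solvable.

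The step I expect to be the main obstacle is exactly the feature that separates CPM from ordinary manipulation: no target is fixed in advance, so I must rule out \emph{every} alternative winner rather than merely showing that one designated candidate is reachable. Concretely, the instance must be designed so that for \emph{all} admissible manipulator ballots the STV winner lies in $\{x, y\}$ --- every other candidate should be eliminated in the early rounds through a plurality deficit that a single extra ballot cannot repair. Granting this invariant the reduction is correct: if X3C is solvable the manipulator can make $y$ win with $x \succ' y$, a yes-instance; and if X3C is unsolvable the only reachable winner is the current winner $x$, so no $y \neq x$ can witness possible manipulation and we obtain a no-instance. Certifying this ``winner always lies in $\{x,y\}$'' guarantee while preserving the covering encoding across all orderings the manipulator could report is the delicate bookkeeping the full proof must carry out.
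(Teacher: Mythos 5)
Your outline coincides with the paper's: \NPshort{} membership via the obvious certificate, and hardness by a single-manipulator ($|M|=1$) reduction from X3C modeled on the Bartholdi--Orlin construction \citep{bartholdi1991single}, where the one genuinely new requirement---which you correctly isolate---is that the winner must lie in $\{x,y\}$ for \emph{every} ballot the manipulator could cast, and that the witnessing true ballot can rank $x$ immediately above $y$. So the plan is the right one and flags the right obstacle.

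The gap is that the plan is all you give: the gadget whose existence you posit (``the counts are arranged so that\ldots'', ``granting this invariant\ldots'') is the entire technical content of the hardness proof, and nothing in your writeup establishes that such counts can actually be arranged. For comparison, the paper's construction does the following. Besides $x$ and $y$ it introduces paired candidates $a_i,\overline{a}_i,b_i,\overline{b}_i$ and a dummy $g_i$ for each set $S_i$, element candidates $d_1,\dots,d_n$, and a counter candidate $d_0$; the plurality tallies are tuned so that the first $3m$ eliminations are forced to consist of all the $a_i,\overline{a}_i$ together with exactly one member of each pair $\{b_j,\overline{b}_j\}$, and the manipulator's only effective freedom is the set $J=\{j: b_j \text{ eliminated before } \overline{b}_j\}$. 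Each $d_i$ reaches the safe score $12m$ only if some $b_j$ with $i\in S_j$, $j\in J$, is eliminated (coverage), and $d_0$ reaches $12m$ only if at least $m-\frac{n}{3}$ of the $\overline{b}_j$ are eliminated (so $|J|\le\frac{n}{3}$); $x$ sits at $12m-1$, so $x$ is eliminated next---after which everything cascades to $y$---exactly when $J$ encodes an exact cover, and otherwise some $d$-candidate falls first and its transfer makes $x$ unbeatable. That case analysis is precisely the certification of your ``winner always lies in $\{x,y\}$'' invariant, and without it (or some equivalent construction) the reduction is not yet a proof.
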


\begin{proofsketch}
 It is enough to show the theorem for the case $c=1$. 
 Clearly the problem is in \NPshort{}. To show \NPshort{} hardness, we show a many-one reduction from 
 the X3C problem to it. The reduction is analogous to the reduction given in \citep{bartholdi1991single}. 
 Hence, we give a proof sketch only. 
 Given an X3C instance, we construct an election as follows. The unspecified 
 positions can be filled in any arbitrary way. The candidate set is as follows.
 $$
 \begin{array}{rclcl}
 \mathcal{C} = \{x,y\} &\cup& \{a_1, \dots, a_m\} \cup \{\overline{a}_1, \dots, \overline{a}_m\}\\
			&\cup& \{b_1, \dots, b_m\} \cup \{\overline{b}_1, \dots, \overline{b}_m\}\\
			&\cup& \{d_0, \dots, d_n\} \cup \{g_1, \dots, g_m\}
 \end{array}
 $$
 The votes are as follows.
 \begin{itemize}
  \item $12m$ votes for $y \succ x \succ \dots$
  \item $12m-1$ votes for $x \succ y \succ \dots$
  \item $10m+\frac{2n}{3}$ votes for $d_0 \succ x \succ y \succ \dots$
  \item $12m-2$ votes for $d_i \succ x \succ y \succ \dots, \forall i\in [n]$
  \item $12m$ votes for $g_i \succ x \succ y \succ \dots, \forall i\in [m]$
  \item $6m+4i-5$ votes for $b_i \succ \overline{b}_i \succ x \succ y \succ \dots, \forall i\in [m]$
  \item $2$ votes for $b_i \succ d_j \succ x \succ y \succ \dots, \forall i\in [m], \forall j\in S_i$
  \item $6m+4i-1$ votes for $\overline{b}_i \succ b_i \succ x \succ y \succ \dots, \forall i\in [m]$
  \item $2$ votes for $\overline{b}_i \succ d_0 \succ x \succ y \succ \dots, \forall i\in [m]$
  \item $6m+4i-3$ votes for $a_i \succ g_i \succ x \succ y \succ \dots, \forall i\in [m]$
  \item $1$ vote for $a_i \succ b_i \succ g_i \succ x \succ y \succ \dots, \forall i\in [m]$
  \item $2$ votes for $a_i \succ \overline{a}_i \succ g_i \succ x \succ y \succ \dots, \forall i\in [m]$
  \item $6m+4i-3$ votes for $\overline{a}_i \succ g_i \succ x \succ y \succ \dots, \forall i\in [m]$
  \item $1$ vote for $\overline{a}_i \succ \overline{b}_i \succ g_i \succ x \succ y \succ \dots, \forall i\in [m]$
  \item $2$ votes for $\overline{a}_i \succ a_i \succ g_i \succ x \succ y \succ \dots, \forall i\in [m]$
 \end{itemize}
 The tie breaking rule is $\succ_t = \cdots \succ x$. The vote of $v$ is $x\succ \cdots$. We claim that 
 $v$ is a possible manipulator \textit{iff} the X3C is a \textit{yes} instance. 
 Notice that, of the first $3m$ candidates to be eliminated, $2m$ of them are $a_1, \dots, a_m$ and 
 $\overline{a}_1, \dots, \overline{a}_m$. Also exactly one of $b_i$ and $\overline{b}_i$ will be 
 eliminated among the first $3m$ candidates to be eliminated because if one of $b_i$, $\overline{b}_i$ 
 then the other's score exceeds $12m$.
 We show that the winner is either $x$ or $y$ irrespective of the vote of one more candidate. Let 
 $J:=\{ j: b_j \text{ is eliminated before } \overline{b}_j \}$. If $J$ is an index of set cover then the winner is 
 $y$. This can be seen as follows. Consider the situation after the first $3m$ eliminations. Let $i\in S_j$ 
 for some $j\in J$. Then $b_j$ has been eliminated and thus the score of $d_i$ is at least $12m$. 
 Since $J$ is an index of a set cover, every $d_i$'s score is at least $12m$. Notice that $\overline{b}_j$ 
 has been eliminated for all $j \notin J$. Thus the revised score of $d_0$ is at least $12m$. 
 After the first $3m$ eliminations, the remaining candidates are $x, y, \{d_i:i\in[n]\}, \{g_i:i\in[m]\}, 
 \{b_j: j\notin J\}, \{\overline{b}_j: j\in J\}$. All the remaining candidates except $x$ has score 
 at least $12m$ and $x$'s score is $12m-1$. Hence $x$ will be eliminated next which makes $y$'s score 
 at least $24m-1$. Next $d_i$'s will get eliminated which will in turn make $y$'s score $(12n+36)m-1$. 
 At this point $g_i$'s score is at most $32m$. Also all the remaining $b_i$ and $\overline{b}_i$'s score 
 is at most $32m$. Since each of the remaining candidate's scores gets transferred to $y$ once they are 
 eliminated, $y$ is the winner.
 
 Now we show that, if $J$ is not an index of set cover then the winner is $x$. This can be seen as follows. 
 If $|J|>\frac{n}{3}$, then the number of $\overline{b}_j$ that gets eliminated in the first $3m$ iterations 
 is less than $m-\frac{n}{3}$ . This makes the score 
 of $d_0$ at most $12m-2$. Hence $d_0$ gets eliminated before $x$ and all its scores gets transferred to $x$. 
 This makes the elimination of $x$ impossible before $y$ and makes $x$ the winner of the election.
 
 If $|J|\le \frac{n}{3}$ and there exists an $i\in S$ that is not covered by the corresponding set cover, then $d_i$ gets eliminated before $x$ with a score of $12m-2$ and its score gets transferred to $x$. This makes $x$ win the election.
 
 Hence $y$ can win \textit{iff} X3C is a \textit{yes} 
 instance. Also notice that if $y$ can win the election, then it can do so 
 with the voter $v$ voting a preference like $\cdots \succ x \succ y \succ \cdots$. 
\end{proofsketch}
 
From the proof of the above theorem, we have the following corollary by specifying $y$ as the \textit{actual winner} for the CPMW problem.
\begin{corollary}\label{CPMWSTVNPC}
 The CPMW problem for STV rule is in \NPCshort{}.
\end{corollary}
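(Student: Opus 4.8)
The plan is to reinterpret the reduction of Theorem~\ref{UPCMSTVNPC} as a reduction to CPMW, simply by declaring $y$ to be the given actual winner. First I would observe that CPMW for STV lies in \NPshort{}: a witness is the manipulators' truthful profile $(\succ_j^{\prime})_{j \in M}$, and one checks in polynomial time both that $x \succ_j^{\prime} y$ for every $j \in M$ and that running STV on $((\succ_i)_{i \in \mathcal{V} \setminus M}, (\succ_j^{\prime})_{j \in M})$ returns $y$.

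For hardness I would take an arbitrary X3C instance and build exactly the election constructed in the proof of Theorem~\ref{UPCMSTVNPC}, with the single suspected voter $v$ voting $x \succ \cdots$, the coalition $M = \{v\}$, and now additionally fix the designated actual winner to be $y$. This reduction runs in polynomial time for the same reason as before. Its correctness reduces to a single observation already contained in that proof: when $v$ reports $x \succ \cdots$ the current winner is $x$, and $v$ can steer the outcome to a candidate ranked below $x$ in its truthful vote if and only if that candidate is $y$, which in turn happens exactly when the X3C instance is a \textit{yes} instance (witnessed by a truthful preference of the shape $\cdots \succ x \succ y \succ \cdots$). Hence $\{v\}$ is a coalition of possible manipulators against $y$ if and only if X3C is a \textit{yes} instance, and \NPshort{}-completeness follows.

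The only point that needs care is that fixing the target to be $y$ cannot, in principle, make the CPMW instance harder to satisfy than the underlying CPM instance: CPM lets the manipulator aim at any candidate that the current winner beats in the truthful vote, whereas CPMW pins this target down to $y$. But the analysis of the reduction shows that no candidate other than $y$ can overtake $x$ together with the remaining field---the winner is always either $x$ or $y$---so the set of achievable targets is exactly $\{y\}$ and the CPM and CPMW questions coincide on this family of instances. I do not anticipate any further obstacle; the corollary follows by reading the existing construction through the lens of the CPMW definition.
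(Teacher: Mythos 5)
Your proposal is correct and follows essentially the same route as the paper, which obtains the corollary by reusing the reduction of Theorem~\ref{UPCMSTVNPC} verbatim and simply designating $y$ as the actual winner; the justification you supply (the winner is always either $x$ or $y$, so the CPM and CPMW questions coincide on these instances) is exactly the observation implicit in the paper's one-line derivation. No further comment is needed.
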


\section{Conclusion}\label{sec:con}

In this work, we have initiated a promising research direction for detecting manipulation in elections. We have proposed the notion of \emph{possible} manipulation and explored several concrete computational problems, which we believe to be important in the context of voting theory. These problems involve identifying if a given set of voters are possible manipulators (with or without a specified candidate winner). We have also studied the search versions of these problems, where the goal is to simply detect the presence of possible manipulation with the maximum coalition size. 
We believe there is  theoretical as well as  practical interest in studying the proposed problems. We have
provided algorithms and hardness results for many common voting rules. 

In this work, we considered elections with unweighted voters only. An immediate future research direction 
is to study the complexity of these problems in  weighted elections. Further, verifying the number 
of false manipulators that this model catches in a real or synthetic data set, where, we already have some knowledge about the manipulators, would be interesting. It is our conviction that both the problems that we have studied here have initiated an interesting research direction with significant promise and  potential for future work.

\section{Acknowledgement}

We acknowledge the anonymous reviewers of AAMAS 2014 for providing constructive comments to improve the paper.

\bibliographystyle{apalike}
\bibliography{bib_pm}

\end{document}